\def\newprooflikeenvironment#1#2#3#4{
      \newenvironment{#1}[1][]{
          \refstepcounter{equation}                                 
          \begin{proof}[{\rm\csname#4\endcsname{#2~\theequation}
          }]
         \it \def\qedsymbol{#3}}
         {\end{proof}}}                                             
 \newtheorem{theorem}[equation]{Theorem}                            
 \newtheorem{lemma}[equation]{Lemma}                                
 \newtheorem{proposition}[equation]{Proposition}
\title[A holography theory of PSM and deformation quantization]{A holography theory of Poisson sigma model and deformation quantization}
\author{Xiaoyi Cui}
\address{School of Mathematics (Zhuhai), Sun Yat-sen University, Zhuhai 519082, China}
\email{xiaoyi.cui@gmail.com}
\author{Chenchang Zhu}
\address{Mathematics Institute, Georg-August-University of G\"{o}ttingen, G\"{o}ttingen 37073, Germany}
\address{Center for Mathematical Sciences, Huazhong University of Science and Technology, Wuhan 430074, China}
\email{chenchang.zhu@mathematik.uni-goettingen.de}
\date{\today}
\begin{document}

\thanks{
We thank Si Li for helpful discussion. XC was partially supported by NSFC grant 11801588 and by Guangdong Natural Science Foundation grant 2018A030313273. XC thank Kavli IPMU and the University of G\"{o}ttingen for hospitality, where part of the work has been done. 
}

\maketitle

\begin{abstract} 
We construct a Chern-Simons type of theory using the $l_\infty$ algebra encoded by a Poisson structure on arbitrary Riemann surfaces with boundaries. A deformation quantization within the Batalin-Vilkovisky framework is performed by constructing propagators with Dirichlet boundary condition on Fulton-MacPherson compactified configuration space. Our results show that the BV quantization is independent of several gauge choices in propagators, which leads to global observables that are candidates for geometric invariants of Poisson structure and topological invariants for the worldsheet structure. At the level of local observables, a Swiss-Cheese algebra structure has been identified. If the Poisson structure is symplectic, the two-dimensional theory is homotopic to a boundary theory. This is known in the classical case, and we confirm that the quantum homotopy exists as well.
\end{abstract}

\tableofcontents

\section{Introduction and main results}

It is known that Kontsevich's deformation quantization of Poisson manifold has a field-theoretical interpretation using Poisson sigma models \cite{Kont97, CF00}. At operadic level the classical Poisson structure $\mathbb P_1$ is deformed into a the homotopy associative $\mathbb E_1$ structure in any reasonable deformation quantization scheme. According to the perturbative quantum field theory developed in \cite{CG}, the quantization map should be realised as a weak equivalence from the classical observables of certain one-dimensional field theory, to the quantum observables. Here the classical observable has the structure of a locally constant factorisation $\mathbb P_1$-algebra, while the quantum counter-part has the structure of a locally constant factorisation $\mathbb E_1$-algebra. The quantization map relates the above two structures.  A full description of Fedosov quantization in terms of one-dimensional field theory and the relevant factorisation algebra structure has been studied in \cite{GLL}. However, the general Poisson case remains interesting for us. On the other hand, Tamarkin and Tsygan \cite{TT00} showed that there is an alternative proof of formality by investigating the homotopy $\mathbb E_2$ structure, which should be observed from the the local observable correlation of the underlying two-dimensional topological quantum theory. We re-examine the Cattaneo-Felder model, in the formulation of a deformation problem in a curved $l_\infty$ algebra with a $\mathbb P_2$ structure, over an arbitrary Riemann surface $\Sigma$ with boundaries. The quantum local observable in the bulk exhibits $\mathbb E_2$ multiplication, and then the local boundary observable should exhibits $\mathbb E_1$ multiplication, connecting to the case of Kontsevich. We consider the BV quantization scheme of \cite{CG} as generalised to the boundary case, and seek for possible geometric/topological invariants coming from observables, and then obtain a boundary theory of the Poisson sigma model. The structure of local observables should be understood as a Swiss-Cheese ($SC$-) algebra \cite{Vor97}. This can be understood as one partial result of holograph principal. Within this paper, by full holograph, we mean that in the symplectic case, the deformation problem encoded in our theory is only meaningful at the boundary of the worldsheet $\Sigma$. There are rich holograph phenomena in quantum field theories, whose underlying algebraic structure remains to be uncovered yet.  

A Poisson manifold is a smooth manifold $M$ with a globally defined skew symmetric bivector field $\Pi \in \wedge^2 \mathcal T_M$ subject to the Jacobi identity. We shall see that there exists a $\Omega_M$-linear curved $l_\infty$ algebra structure over the graded vector space $\mathfrak g = \Omega_M\otimes(\mathcal T_M[-1]\oplus \mathcal T^\vee _M)$ with operations $\{l'_n\}_{n\geq0}$, which encodes the infinite jet bundle of polyvector fields $(PV_M, [-,\Pi])$. Using this $l_\infty$ structure, a Chern-Simons type of theory \cite{Co11} is constructed as below, which will again be referred to as a Poisson sigma model by us without confusion.

The space of fields $\mathcal E$ in our model is a formal moduli stack of almost constant maps from the worldsheet $\Sigma$ (possibly with boundaries) to the target curved $l_\infty$ algebra $\mathfrak g[1]$. The mapping space is modelled by the extended $l_\infty$ algebra $\Omega_\Sigma\otimes_{\mathbb R} \mathfrak g$.
There is a degree $-1$ symplectic pairing 
on  $\mathcal E$ by 
$$ (\alpha\otimes a)\otimes (\beta\otimes b)\mapsto \int_\Sigma \alpha\wedge \beta  \langle a, b\rangle_{1},\quad \alpha,\beta\in \Omega_\Sigma,\, a, b\in \mathfrak g,$$
where $\langle-,-\rangle_1$ is induced from the degree $1$ $\Omega_M$-linear pairing between $\Omega_M\otimes\mathcal T_M$ and $\Omega_M\otimes\mathcal T^\vee_M[1]$. 
The BV action is now given by 
$$S^{BV} [X+\eta]= \int_\Sigma \langle X+\eta, \frac12 d(X+\eta)+\sum_{n=0}^\infty \frac1{(n+1)!}l'_n(X+\eta)^{\otimes n}\rangle_1$$
for $X\in \Omega_\Sigma\otimes_{\mathbb R} \Omega_M\otimes_{\mathcal O_M}\mathcal T_M$ and $\eta \in \Omega_\Sigma\otimes_{\mathbb R} \Omega_M\otimes_{\mathcal O_M}\mathcal T^\vee_M[1]$.
As we will see in Sec.~\ref{linfty}, the operations $\{l'_n\}_{n\geq0}$ have two origins, the $\{l_n\}_{n\geq0}$ encoding the formal geometry of $M$, and $\{\Pi_n\}_{n\geq1}$ coming from the Poisson bivector field $\Pi$ part. If $\Pi$ is trivial, the operations $\{l_n\}_{n\geq0}$ alone define a cotangent theory in the sense of \cite{Co11}. It is therefore useful to distinguish those two classes of operations in the theory. Under this notation, we have that\footnote{This definition differs from the previous one up to a total derivative, which does not matter once the boundary condition is chosen.}
\begin{eqnarray*}
S^{BV} [X, \eta]&=& \int_\Sigma\,\Big( \langle \eta, dX+ \sum_{n=0}^\infty \frac1{(n+1)!}l_n(X)^{\otimes n}\rangle_1 \\&&+ \langle \eta, \sum_{n\geq1} \frac1{2!}\frac1{(n-1)!}\Pi_n(\eta, X^{\otimes (n-1)})\rangle_1\Big).
\end{eqnarray*}

The $(-1)$-symplectic pairing on $B\mathfrak g$ corresponds to a homotopy $\mathbb P_0$-structure \cite{CG} at the level of classical global observables. In particular, there is a $\mathbb P_0$ Poisson bracket, i.e., the antibracket, $\{-,-\}_0$ on observables whose singular supports intersect transversely, under which the classical action $S^{BV}$ becomes the Hamiltonian function for the Chevalley-Eilenberg differential (a.k.a., vector field) over the extended curved $l_\infty$ algebra $\Omega_\Sigma\otimes_{\mathbb R}\mathfrak g\equiv \mathcal E[-1]$. The Classical Master Equation is given by the nilpotence of the Chevalley-Eilenberg differential, $\{S^{BV},S^{BV}\}_0=0$.

In the BV quantization, one defines a BV laplacian $\Delta$ of degree $1$ on $\mathcal E$, which is a simple tensor product of analytic part and algebra part. The analytic part is given by a smooth diagonal class $\Xi$ (see Sec.~\ref{BV:prop}) over $\Sigma\times\Sigma$. The algebraic part is the natural pairing $\langle-,-\rangle_1$ on $\mathfrak g$. Overall $\Delta$ provides a bilinear pairing on the linear observables, and extends to a second order differential operator on full observables, whose failure of being a derivation is captured by the quantum bracket $\{-,-\}$, in the sense that 
\[\{a,b\} =\Delta(ab) - \Delta(a)b - (-1)^{|a|}a\Delta(b),\quad \forall a,b\in \mathcal O(\mathcal E).\]

Our main results are as follows. Firstly, we consider the BV quantization of the Poisson model on compact Riemann surface with boundaries, and show that the obstruction to the quantization vanishes. In the process of constructing effective action, one needs to fix a propagator. But by analysing the parameterised/family Quamtum Master Equation (QME), one sees that the gauge choices involved in the computation leads to homotopic result.

\begin{theorem}
The effective action $I_{\rm eff}$ for theory defined by classical action $S^{BV}$ given above satisfies QME
\[ dI_{\rm eff} + \hslash \Delta I_{\rm eff}+\{I_{\rm eff},I_{\rm eff}\}=0.\] 
Further more, different gauge-choices of BV quantization (i.e., gauge-choices of propagators) can be encoded in a parameterised QME, by promoting to the $\Omega_{\mathbb I}^{*}$-valued quantities, the effective action $\tilde I_{\rm eff}$ satisfies
\[ (d_\Sigma+d_t)\tilde I_{\rm eff} + \hslash \tilde \Delta \tilde I_{\rm eff}+\{\tilde I_{\rm eff},\tilde I_{\rm eff}\}\tilde {}=0.\]
\label{thmqme}
\end{theorem}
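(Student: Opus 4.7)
The plan is to follow the homological perturbation and BV quantization scheme of Costello--Gwilliam adapted to the two-dimensional boundary case. First I would construct the effective action $I_{\rm eff}$ order by order in $\hslash$ by the standard Feynman sum
\[
I_{\rm eff}=\sum_{\Gamma}\frac{\hslash^{g(\Gamma)}}{|\mathrm{Aut}(\Gamma)|}\,W_\Gamma(P;I),
\]
over connected stable graphs, where $P$ is the Dirichlet propagator constructed in Sec.~\ref{BV:prop} and $I$ is the non-quadratic part of $S^{BV}$ coming from the $l_n$- and $\Pi_n$-vertices. Finiteness of each weight $W_\Gamma$ follows from the locality of the vertices together with the controlled blow-up behaviour of $P$ on the Fulton--MacPherson compactification $\overline{C}_{n,m}(\Sigma)$, so no further UV renormalisation is needed in two real dimensions.

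Next I would verify the QME. The classical master equation $\{S^{BV},S^{BV}\}_0=0$ is equivalent to the curved $l_\infty$ relations for $\{l'_n\}$ described in Sec.~\ref{linfty}, so the tree-level part of $dI_{\rm eff}+\{I_{\rm eff},I_{\rm eff}\}$ already vanishes. At each higher loop order the failure of the QME is a local cocycle in the deformation complex, and I would identify it by a Cattaneo--Felder style Stokes argument on $\overline{C}_{n,m}(\Sigma)$. The codimension-one boundary strata split into three families: internal bulk collisions, which reproduce the $\{-,-\}$ term; short-distance contractions along the diagonal, producing $\hslash\Delta$ through the class $\Xi$; and collapses to $\partial\Sigma$, which vanish because $P$ satisfies Dirichlet boundary conditions. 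Splitting $l'_n=l_n+\Pi_n$, the cotangent part driven by $\{l_n\}$ contributes no anomaly since, as stated in the introduction, that sub-theory is cotangent in the sense of Costello, while the remaining $\Pi_n$-contributions are Kontsevich-type weights whose sum cancels by the graph-level packaging of the $l_\infty$ Jacobi identities.

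For the family QME, I would replace $P$ by a one-parameter family $P_t$ interpolating between two admissible gauge choices, regarded as a form on $\Sigma\times\Sigma\times\mathbb I$. The associated family BV Laplacian $\tilde\Delta$ is built from this $t$-dependent kernel, and $\tilde I_{\rm eff}$ is defined by the same Feynman rules with $P$ replaced by $P_t$, yielding an $\Omega_{\mathbb I}^*$-valued functional. Running the same Stokes argument now on $\overline{C}_{n,m}(\Sigma)\times\mathbb I$ and using $d P_t = d_\Sigma P_t + d_t P_t$, the additional boundary produced by the $t$-direction of the cylinder reproduces exactly the $d_t\tilde I_{\rm eff}$ term, while the remaining boundary contributions assemble, by the same matching as in the unparameterised case, into the parameterised identity $(d_\Sigma+d_t)\tilde I_{\rm eff}+\hslash\tilde\Delta\tilde I_{\rm eff}+\{\tilde I_{\rm eff},\tilde I_{\rm eff}\}\tilde{}=0$.

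The hardest step will be the detailed matching at $\partial\Sigma$: one must show that, among the boundary strata of $\overline{C}_{n,m}(\Sigma)$ which touch the worldsheet boundary, every would-be anomaly is killed by the interplay of the Dirichlet condition on $P$, the fact that the $\Pi_n$-vertices output only $\eta$-legs, and the dimension count in two real dimensions. This is the point at which the Fulton--MacPherson geometry of Sec.~\ref{BV:prop}, the Swiss-Cheese combinatorics of bulk/boundary insertions, and the $l_\infty$ relations all have to interact precisely, and it is where I expect the most substantial technical work.
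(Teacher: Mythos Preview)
Your overall strategy---Stokes on the compactified configuration space---is the same as the paper's, but you have misallocated the terms, and this is a genuine gap. In the paper's argument, applying $d_\Sigma$ to $I_{\rm eff}$ and using Stokes produces \emph{two} contributions: a bulk integral in which one propagator $P$ is replaced by $dP=\Xi$, and an integral over $\partial\overline{{\rm Conf}(\Gamma,\Sigma)}$. The $\hslash\Delta$ and $\{-,-\}$ terms both come from the \emph{bulk} $\Xi$-piece, split according to whether the $\Xi$-edge is separating (giving $\{I_{\rm eff},I_{\rm eff}\}$) or not (giving $\hslash\Delta I_{\rm eff}$). They do \emph{not} arise from boundary strata of the configuration space. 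Consequently, for the QME to hold, \emph{every} boundary stratum must contribute zero, and this is where the real work lies.

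Your proposal misses two of the three vanishing mechanisms. The strata where a vertex meets $\partial\Sigma$ do vanish by the Dirichlet condition, as you say. But the strata where exactly two bulk vertices collide do not produce the bracket; instead the propagator restricts to the fibrewise volume form, the fibre integral gives $1$, and the resulting merged vertex vanishes after summing over graphs by a case analysis ($l_n$--$l_m$, $l_n$--$\Pi$, $\Pi$--$\Pi$) using the curved $l_\infty$ relations and the Jacobi identity for $\Pi$. The strata where three or more vertices collide require Kontsevich's hidden-face vanishing lemma, which you do not mention at all; this is the most delicate analytic step and cannot be replaced by the observation that the $l_n$-sector is a cotangent theory (cotangent theories can and do have one-loop anomalies in general). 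For the family statement, note also that the paper promotes not only $P$ but $\Xi$ itself to a $t$-family, so that $\tilde\Delta$ and the bracket acquire $dt$-components; your outline treats only the propagator as $t$-dependent.
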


A classical linear observable supported over $U\subset \Sigma$ is an element in the distributional-valued form $\Omega'_U\otimes \mathfrak g^\vee$. A classical observable of homogeneous degree $n$ supported over $U$ is, loosely speaking, an $n$-folded external product\footnote{Since distributions can not be multiplied, this does not quite make sense. We refer the serious readers to \cite{CG} for a careful definition.} of the linear observables, which, viewed as a distribution, is supported on the Cartesian product $U^{\times n}$. In particular, we are interested in the local observables, which is supported near a single point in $\Sigma$. One can show that such observables, in the classical level, are isomorphic to $C^*(\mathfrak g)$ if $z$ is a bulk point and to $C^*(\mathfrak h)$ if $z$ is a boundary point. Here by $\mathfrak h $ we mean the $l_\infty$ algebra $\Omega_M\otimes \mathcal T_M[-1]$ which encodes the smooth structure of $M$, see Sec.~\ref{linfty}. Formally, for an observable $O_z$ supported near $z$, the quantization gives rise to a non-local observable formally written as 
\[\frac{d}{dt}\big({\rm log}e^{h\partial_P} e^{\frac{I}{\hslash}+tO_z}\big)|_{t=0}.\]
Such formal expression is a concise way of writing down a combinatoric formula involving Feynman graph enumeration \cite{Co08}. It will become clear that in our theory, such graph enumeration is well-defined after we prove a similar proposition about $I_{\rm eff}$ in Sec.~\ref{sec:qme}.
The BV quantization gives rise to the quantum product of local observables, both for bulk observables and for boundary observables, which is a well-defined Swiss-Cheese algebra. This, in a proper sense, shall provide an intuitive understanding on why Tamarkin's proof of formality theorem needs the little 2-disk operad.

\begin{theorem}
For each contractible bulk open set $U\subset \Sigma^\circ$, there exists a homotopy family of multiplication on $C^*(\mathfrak g)[[\hslash]]$ given by \[ {\rm Conf}(2,U) \times C^*(\mathfrak g)\otimes C^*(\mathfrak g)\to C^*(\mathfrak g)[[\hslash]].\] Similarly, for each each contractible $U\subset\Sigma$ at the boundary such that $U\cap\partial\Sigma\neq\emptyset$ is contractible, there exists a homotopy family of multiplication on $C^*(\mathfrak h)[[\hslash]]$ given by \[ {\rm Conf}(2, U\cap \partial\Sigma) \times C^*(\mathfrak h)\otimes C^*(\mathfrak h)\to C^*(\mathfrak h)[[\hslash]].\]
\label{obser-prop}
\end{theorem}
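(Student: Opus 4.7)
\bigskip
\noindent\textbf{Proof proposal.} My plan is to follow the factorisation-algebra formalism of \cite{CG}, constructing the homotopy families of multiplications from the quantum operator product of local observables and then verifying smooth dependence on the configuration-space positions.

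First, using the identification of classical bulk local observables at $z\in\Sigma^\circ$ with $C^*(\mathfrak g)$ (stated above and underpinned by the $l_\infty$-structure of Sec.~\ref{linfty}), I would represent two observables $O_1,O_2$ by cochains supported in small neighbourhoods of distinct points $z_1,z_2\in U$. The BV-quantised product $O_1\star_\hslash O_2\in C^*(\mathfrak g)[[\hslash]]$ is then defined by the renormalised Feynman sum
\[
O_1\star_\hslash O_2 \;=\; \frac{\partial^2}{\partial t_1\,\partial t_2}\log\bigl(e^{\hslash\partial_P}\,e^{I_{\rm eff}/\hslash+t_1O_1+t_2O_2}\bigr)\Big|_{t_1=t_2=0},
\]
whose terms are indexed by connected graphs with two distinguished tails labelled by $O_1$ and $O_2$. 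Finiteness of each graph contribution follows because the propagator $P$, built with Dirichlet boundary condition on the Fulton-MacPherson compactification $\overline{\rm Conf}(n,\Sigma)$, extends to a smooth form on the compactified domain; since $(z_1,z_2)$ lies in the open stratum ${\rm Conf}(2,U)$, integration over the internal vertices yields a smooth function of the external positions.

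Next, Theorem~\ref{thmqme} lets me descend $\star_\hslash$ to cohomology: the QME implies that $Q_I:=d+\hslash\Delta+\{I_{\rm eff},-\}$ squares to zero, so replacing an $O_i$ by a $Q_I$-coboundary alters $O_1\star_\hslash O_2$ only by a coboundary, and the construction depends only on the class represented. Smooth dependence of the product on $(z_1,z_2)$ is inherited from smoothness of the propagator away from the diagonal, and contractibility of $U$ guarantees that all such products are homotopic, yielding the homotopy family and, on cohomology, the expected $\mathbb E_2$-structure. For the boundary case, contractibility of $U\cap\partial\Sigma$ lets me model the geometry locally by the upper half-plane; the Dirichlet boundary condition forces the $\eta$-components to vanish on $\partial\Sigma$, so that boundary linear observables take values in $\mathfrak h^\vee$ and boundary local observables at a single point are identified with $C^*(\mathfrak h)$. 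Repeating the Feynman construction with the boundary-adapted propagator and two boundary points $z_1,z_2\in U\cap\partial\Sigma$ produces the desired map ${\rm Conf}(2,U\cap\partial\Sigma)\to C^*(\mathfrak h)[[\hslash]]$.

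The main obstacle I anticipate is controlling the Feynman integrals near the boundary strata of $\overline{\rm Conf}(n,\Sigma)$, particularly the codimension-one strata where internal vertices collide with one another or approach $\partial\Sigma$. Showing that these strata either contribute zero or produce a $Q_I$-coboundary---so that only the top stratum survives up to homotopy---will require a careful analysis of how the Dirichlet propagator degenerates against the $l_\infty$-vertices $\{l_n\}$ and $\{\Pi_n\}$, and is precisely the point at which the Swiss-Cheese distinction between bulk vertices and boundary-restricted vertices must be exploited in an essential way.
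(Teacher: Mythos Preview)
Your approach is essentially the same as the paper's: define the product via the Feynman expansion $W(P,f(z)g(w))$ and then argue that variation of the insertion points (and of the gauge choices in $P$) produces only BV-exact terms. You have correctly identified the crux of the matter in your final paragraph.

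Where your sketch falls short is the step ``contractibility of $U$ guarantees that all such products are homotopic.'' Smooth dependence on $(z_1,z_2)$ plus contractibility of the parameter space does not by itself show that the family is a \emph{chain}-homotopy family compatible with the BV differential; one must show that the de~Rham differential $d$ on $\mathrm{Conf}(2,U)$ applied to $W(P,f(z)g(w))$ is $Q_I$-exact. The paper does exactly this: applying Stokes' theorem to the Feynman integrals gives
\[
dW(P,f(z)g(w)) = QW + \hslash\,\Delta W + \{I_{\rm eff},W\} \;+\; (\text{boundary-strata contributions}),
\]
and the obstacle you anticipate is dispatched by recycling the vanishing results already proved for the QME (Theorem~\ref{thmqme}): vertices hitting $\partial\Sigma$ die by the Dirichlet condition, and collisions of two or more bulk vertices die by the curved $l_\infty$-identities and Kontsevich's vanishing lemma. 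No new Swiss-Cheese--specific analysis is needed at this stage. The paper then promotes everything to a one-parameter family $\tilde P$ to show that the gauge choices in the propagator likewise contribute only BV-coboundaries; your sketch does not mention this second half, which is part of what the theorem claims.

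A minor point: in your formula for $O_1\star_\hslash O_2$ the exponent should contain the classical interaction $I/\hslash$, not $I_{\rm eff}/\hslash$; the effective action appears only after the graph expansion, in the bracket term $\{I_{\rm eff},W\}$.
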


Finally we look at the holograph property of the Poisson model when there is a symplectic structure at present. The holograph is a prediction which can be verified at classical level for the Cattaneo-Felder model, which we shall briefly review by the end of this section. Mathematically we will show that there exists a one-dimensional theory $S^{1d}$ as described in \cite{GLL} such that $S^{1d}$ is homotopic to the Poisson BV theory $S^{BV}$, i.e., there exists a homotopy $H\in Obs(\Sigma)$ such that $S^{BV} - S^{1d} = \{S^{BV}, H\}_0$.
We show that a similar relation holds at the quantum level.

\begin{theorem}
The quantised observable $H_{\rm eff}$ induces a homotopy between quantum Poisson model and the quantum 1d sigma model, i.e., 
\[dH_{\rm eff}+\Delta H_{\rm eff}+\{I_{\rm eff}, H_{\rm eff}\} = S^{1d}_{\rm eff} - I_{\rm eff}.\]
\label{bulkbdy}
\end{theorem}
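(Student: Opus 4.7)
The plan is to lift the classical homotopy $\{S^{BV}, H\}_0 = S^{1d} - S^{BV}$ to the quantum level via the parameterised BV renormalization of Theorem~\ref{thmqme}, applied to a path interpolating between $S^{BV}$ and $S^{1d}$ in the space of classical solutions to the master equation.

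First I would define $H_{\rm eff}$ by renormalizing $H$ as an auxiliary action: $H_{\rm eff}$ is the sum over connected Feynman graphs whose vertices are decorated by $S^{BV}$ with one marked vertex carrying $H$, contracted using the same propagator that builds $I_{\rm eff}$. Equivalently, $H_{\rm eff} = \partial_s|_{s=0}$ of the effective action obtained by applying the renormalization procedure to the perturbed action $S^{BV} + sH$. Well-definedness as a formal series in $\hslash$ follows from the analytic estimates of Sec.~\ref{sec:qme}, and gauge independence up to BV-exact terms follows from the argument of Theorem~\ref{thmqme} adapted to a marked insertion.

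Next I would verify the identity through the one-parameter family $S_t := (1-t)S^{BV} + t S^{1d}$ for $t \in \mathbb I$. Each $S_t$ satisfies the classical master equation: the endpoints do by hypothesis, and the mixed bracket vanishes via
\[\{S^{BV}, S^{1d}\}_0 = \{S^{BV}, S^{BV}\}_0 - \{S^{BV}, \{S^{BV}, H\}_0\}_0 = 0\]
by graded Jacobi and CME for $S^{BV}$. Renormalizing along $\mathbb I$ produces $\tilde I_{\rm eff} \in \Omega^*_{\mathbb I}\otimes Obs(\Sigma)[[\hslash]]$, and Theorem~\ref{thmqme} gives the $(d_\Sigma + d_t)$-QME for $\tilde I_{\rm eff}$. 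Decomposing $\tilde I_{\rm eff} = I_{\rm eff}(t) + dt\wedge H_{\rm eff}(t)$ and extracting the $dt$-component of the parameterised QME yields, after standard sign conventions,
\[\partial_t I_{\rm eff}(t) = d_\Sigma H_{\rm eff}(t) + \Delta H_{\rm eff}(t) + \{I_{\rm eff}(t), H_{\rm eff}(t)\}.\]
Evaluating at $t = 0$, the derivation property of BV renormalization with respect to linear variations of the action identifies $H_{\rm eff}(0)$ with the $H_{\rm eff}$ defined above, while the left-hand side becomes $S^{1d}_{\rm eff} - I_{\rm eff}$, producing the claimed identity.

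The main obstacle will be controlling the Fulton-MacPherson boundary strata where colliding vertices meet $\partial\Sigma$. Because propagators satisfy Dirichlet conditions, these strata are nontrivial and must be shown to contribute exactly the pieces that transform $I_{\rm eff}$ into $S^{1d}_{\rm eff}$, without spurious terms. The symplectic hypothesis is essential here: non-degeneracy of $\Pi$ permits the algebraic elimination of the bulk field $X$ in terms of the boundary field $\eta$, and the renormalized Gaussian elimination realised by $H_{\rm eff}$ is precisely what implements this bulk-to-boundary reduction at the quantum level. A secondary technical point is a careful accounting of combinatorial factors and signs in the Feynman expansion to ensure the coefficients of $\Delta H_{\rm eff}$ and $\{I_{\rm eff},H_{\rm eff}\}$ in the final identity match exactly what is stated.
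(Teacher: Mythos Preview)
Your proposal has a genuine gap at the step where you evaluate the parameterised QME at $t=0$ and claim the left-hand side becomes $S^{1d}_{\rm eff} - I_{\rm eff}$. The $dt$-component of the family QME gives $\partial_t I_{\rm eff}(t)$ on the left, and evaluating at $t=0$ yields the \emph{infinitesimal} variation $\partial_t I_{\rm eff}(0)$, not the \emph{finite} difference $I_{\rm eff}(1) - I_{\rm eff}(0)$. Since the effective action is computed by summing over graphs with products of vertices, $I_{\rm eff}(t)$ is highly nonlinear in $t$ even though $S_t$ is linear. To recover the finite difference you would need to integrate over $t\in[0,1]$, but then the bracket term becomes $\int_0^1 \{I_{\rm eff}(t), H_{\rm eff}(t)\}\,dt$, which is not of the form $\{I_{\rm eff}, \text{something}\}$ because $I_{\rm eff}(t)$ varies. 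Relatedly, your identification of $H_{\rm eff}(0)$ with the renormalization of $H$ is off: the $dt$-component of an action family produces graphs with one vertex carrying $\partial_t S_t = S^{1d} - S^{BV} = -\{S^{BV},H\}_0$, not $H$ itself. A secondary issue is that Theorem~\ref{thmqme} as stated varies the \emph{propagator}, not the action; you would need to re-derive a separate family QME for action families, and since $S^{1d}$ has boundary-supported vertices, the configuration-space analysis there is not covered by the existing argument.

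The paper's proof is instead a direct Stokes computation on $\overline{{\rm Conf}(\Gamma,\Sigma)}$ for graphs with one distinguished $H$-vertex. The boundary strata are classified case by case: collisions of three or more bulk vertices vanish by Kontsevich's argument as in the QME proof; the strata where the single $H$-vertex hits $\partial\Sigma$ contribute exactly $S^{1d}_{\rm eff}$, because $H$ restricted to the boundary reproduces the vertices of $S^{1d}$; and among two-vertex collisions, the only surviving sub-case is the $\Pi$-vertex merging with an $H$-vertex, which by $\Pi\circ\Pi^{-1}=\mathrm{id}$ reproduces an $l_n$-vertex of $I$ and hence yields $-I_{\rm eff}$. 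This concrete mechanism---the $H$-vertex either escaping to $\partial\Sigma$ or being eaten by a $\Pi$-vertex---is the heart of the argument, and your interpolation scheme does not access it.
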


Holography phenomena were investigated mathematically also in \cite{CMR, MSW19}, where the authors relate the deformation quantization on the bulk to the canonical quantization in the boundary theory. Formulating canonical quantization rigorously needs more involving machinery, which we shall not pursue here. Our results should be understood as an example of simplified version relating deformation quantization of different dimensions.

In the remaining of the introduction section, we briefly review the original physical model. Given a smooth manifold $M$ equipped with Poisson bi-vector field in local coordinates given by $\Pi^{ij}\partial_i\partial_j$, one starts by considering the following fields:
$$X\in C^\infty (\Sigma, M),\quad \eta\in \Omega^1_\Sigma\otimes \Gamma(\Sigma, X^*T_M). $$
Now the action is given by a functional on the space of fields locally written as:
$$S[X, \eta] = \int_\Sigma \eta_i d X^i +\frac12 \Pi^{ij}(X)\eta_i\wedge \eta_j.$$

Indeed, $\eta$ has to transform like $-\Pi^{-1} dX$, as indicated by the equation of motion. In this case, the gauge symmetry is given by the Poisson Lie algebroid action, and a BRST quantization is sufficient to quantize the system. Alternatively, we could {\it integrate out} $\eta$ to get a new action $$S[X] = \int_\Sigma X^* \Pi^{-1},$$
which can be reduced to a one-dimensional theory easily via transgression, using the fact that the symplectic form is locally exact. Mathematically the classical procedure of ``integrating fields out" can be encoded in a homotopy between the bulk and the boundary theory, which we state in Prop.~\ref{classicalbdythy}.

The physical Poisson sigma model has a gauge symmetry, induced by the Hamiltonian vector fields on the Poisson manifold. However, the gauge algebra is not closed, and this is the common place where the original BV formulation could deal with\cite{CF00}. In our approach, however, classical BV theory is viewed as a formal moduli problem, where both the gauge symmetry and the moduli problem are naturally encoded in an $l_\infty$ structure equipped with a $(-1)$-shifted symplectic structure. 

The structure of the paper is as follows. In Sec.~\ref{linfty}, we explain the construction of space of field as an $l_\infty$ algebra. In Sec.~\ref{classicalbvthy} we discuss about the classical BV theory in the boundary case. In Sec.~\ref{quant} we construct the propagator over Fulton-MacPherson compactified configuration space, define the effective quantum action of Poisson model, and show some vanishing results. We also present the proof of Theorem~\ref{thmqme} in Sec.~\ref{sec:qme}. In Sec.~\ref{secobserv} we discuss quantum observables in the theory. In Sec.~\ref{obs:glo}, we show that the global observables lead to a topological invariant for the worldsheet $\Sigma$ and provides useful probes for the Poisson structure. In Sec.~\ref{obs:loc}, we show that the local observables form a Swiss-Cheese algebra, which we show by proving Theorem~\ref{obser-prop}. In  Sec.~\ref{obs:bdy}, a form of holograph of the Poisson sigma model with symplectic target is concluded and proved in Theorem~\ref{bulkbdy}.

\section{Classical BV theory with boundaries}
\subsection{The $l_\infty$ structure}
\label{linfty}

The classical BV formalism, in the approach of \cite{Co08, CG} starts with an $l_\infty$ algebra which controls a deformation problem encodes the classical dynamics of the system. For a Poisson manifold $M$, if one forgets about the Poisson structure, the algebra is given by the following result.

\begin{lemma}[\cite{Co11, GG_ahat}]
Given a smooth real manifold $M$, there exists a contractible family of curved $l_\infty$ algebra structure on $\mathfrak g_M$ such that 
\begin{itemize}
\item as a vector space, $\mathfrak g_M\simeq \Omega_M\otimes \mathcal T[-1]_M$
\item the $l_1$ structure is parameterised by the choices of connections on $T_M$, which results in a contractible of homotopy family of $l_\infty$ structures.
\end{itemize}
The de Rham complex of the cotangent bundle $\Omega_M\otimes \mathcal T^\vee_M$ can be realised as an $l_\infty$ module $\mathfrak g_M^\vee[-1]$ over $\mathfrak g_M$ by the coadjoint representation. The module structure defines a minimally extended $l_\infty$ algebra $\mathfrak g_M\oplus \mathfrak g_M^\vee[-1]$. Moreover, the natural bilinear map 
\[\mathfrak g_M\otimes \mathfrak g^\vee_M\to \Omega_M\] 
now becomes a degree-$(-1)$ invariant pairing\footnote{The definition of invariant pairing over an $l_\infty$ algebra will be given in Sec.~\ref{classicalbvthy}.} for the extended $l_\infty$ algebra.
\end{lemma}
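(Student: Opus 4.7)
The plan is to realise the curved $l_\infty$ structure on $\mathfrak{g}_M$ via Fedosov-type formal geometry, in the spirit of Costello and Grady--Gwilliam. First I would identify $\mathfrak{g}_M$ as a resolution of the jet bundle $J^\infty T_M$ of the tangent bundle. More precisely, after choosing a torsion-free linear connection $\nabla$ on $T_M$, one has an isomorphism of graded bundles $J^\infty T_M \cong \widehat{\mathrm{Sym}}(T_M^\vee)\otimes T_M$, and the canonical flat Grothendieck connection on $J^\infty T_M$ transports under this trivialisation to a Fedosov-type connection $D = d + \nabla + [\gamma,-]$, where $\gamma\in \Omega^1_M\otimes \widehat{\mathrm{Sym}}^{\geq 2}(T_M^\vee)\otimes T_M$ is constructed recursively from the curvature $R^\nabla$ so that $D^2=0$. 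This is where the bulk of the technical work lies: producing $\gamma$ order by order in symmetric degree while using the Bianchi identities to close up.

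Second, I would extract the curved $l_\infty$ operations from $D$. The constant-in-jet part of the structure encodes the curvature $l_0$, the linear-in-jet part encodes $l_1$, and successive symmetric-degree components of $\gamma$ provide $l_n$ for $n\geq 2$; the identity $D^2=0$ is exactly the package of curved $l_\infty$ relations. Since the space of torsion-free connections on $T_M$ is affine over $\Omega^1_M\otimes \mathrm{End}(T_M)$, it is contractible, and a standard homotopy-transfer argument promotes this to a contractible family of $l_\infty$ structures, different choices being related by $l_\infty$ gauge equivalences.

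Third, for the module structure, I would perform the parallel Fedosov construction for $J^\infty T_M^\vee$, giving a resolution $\Omega_M\otimes \mathcal{T}_M^\vee$. The evaluation pairing $T_M\otimes T_M^\vee\to \mathcal{O}_M$ extends by $\Omega_M$-linearity to $\mathfrak{g}_M\otimes \mathfrak{g}_M^\vee\to \Omega_M$. Tracking degrees, $\mathcal{T}_M$ sits in degree $+1$ inside $\mathfrak{g}_M$, so $\mathfrak{g}_M^\vee$ places $\mathcal{T}_M^\vee$ in degree $-1$, and the further shift $\mathfrak{g}_M^\vee[-1]$ lands $\mathcal{T}_M^\vee$ in degree $0$, making the resulting pairing of total degree $-1$ as required. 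The coadjoint $l_\infty$-action on $\mathfrak{g}_M^\vee[-1]$ is defined, as usual, by transposing the adjoint brackets with appropriate signs; this is tautologically the unique choice that turns the evaluation pairing into an invariant form in the $l_\infty$ sense, so once constructed the invariance statement is automatic.

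Finally, I would assemble $\mathfrak{g}_M\oplus \mathfrak{g}_M^\vee[-1]$ as the associated semidirect product $l_\infty$ algebra (the ``minimal extension'' attaching a module trivially to itself), and observe that invariance of the extended pairing follows from the defining property of the coadjoint action together with the vanishing of brackets internal to the module summand. The main obstacle is the Fedosov step: constructing $\gamma$ and verifying $D^2=0$ in a functorial way compatible with the change-of-connection homotopies, which is what makes the whole family $l_\infty$-contractible rather than merely non-empty. The degree bookkeeping between the $[-1]$ shifts and the pairing is a secondary source of potential sign confusion but is entirely formal once set up carefully.
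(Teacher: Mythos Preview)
The paper does not supply its own proof of this lemma: it is quoted directly from \cite{Co11, GG_ahat}, and the surrounding text gives only an informal exposition of the jet-bundle/Grothendieck-connection picture together with a more detailed Fedosov-style recursion in the Appendix (there for the symplectic case). Your proposal is a faithful sketch of exactly that construction --- trivialise $J^\infty T_M$ via a connection, obtain a flat Fedosov connection $D$, read off the $l_n$ from its symmetric-degree components, and dualise for the module --- so it matches both the cited references and the paper's own discussion. One small point to tighten: the paper (and the appendix) organise the recursion via the contracting homotopy $\delta^*$ for the Koszul-type differential $\delta = dx^i\partial_{e^i}$, which is what makes the ``solve for $\gamma$ order by order'' step go through cleanly and uniquely under the gauge condition $\delta^*\gamma=0$; you allude to this but it is worth naming explicitly, since it is also what drives the contractibility-in-families statement.
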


For smooth functions on $ M$, there are two maps 
$j: C^\infty_M \to \mathscr J$ given by sending a function to the flat section of the jet sheaf, and the other way $p: \mathscr J\to C^\infty_M$ given by projecting on the zeroth jet. It is easy to see that $p\circ j=id_{C^\infty_M}$. If we have vector bundles $E\to M$, we need to choose a local trivialisation for $E$, written as $\Phi_U:\pi^{-1} U\subset E\to U\times E_0$ for $U$ an open cover of $M$, then the trivialisation uniquely determines a local splitting of the section sheaf $\mathcal E(U)$ by $C^\infty_U\otimes_{\mathbb R} E_0$, where $E_0$ is the fiber of some point $x_0\in U$. Now we can still apply the map $\Sigma$ on the first entry, and this gives a local splitting of $\mathscr J(\mathcal E)$ via $\sigma\otimes_\mathbb{R} id\circ \Phi\circ p$. On the bundle side, the gluing is given by the transition function $S\in C^\infty_U\otimes {\rm GL} (E_0)$, a compatible gluing on the jet bundle is given by the jets of $S$. In this way we make $\mathscr J(\mathcal E)$ into a $\mathscr J$-module, equipped with a compatible flat connection denoted by $\partial_E$. This construction is well known \cite{BD, BZF}, and has been used in constructing BV theories raised from formal moduli problems \cite{Co11, GG_ahat}. 

To conclude, the construction involves two step: firstly, the infinite jet functor $\mathscr J$ sends each finite dimensional $C^\infty$-vector bundle into finitely generated $\mathscr J(C^\infty)$-modules. At the same time, the differential operators are sent to $C^\infty$-linear maps (i.e., view each $\mathscr J(C^\infty)$-module as an infinite dimensional vector bundle). Secondly, in order to combine the gluing property and the multiplicative structure, one extends the base ring $C^\infty$ by the dg ring $(\Omega^*, d)$. In this light, the Grothendieck connection on the jet bundle extends to a differential on the de Rham complex, and hence the inclusion 
\[C^\infty \to (\Omega\otimes \mathscr J,\partial)\]
is a weak equivalence of filtered differential algebra. 

For each Poisson manifold $M$, the Poisson bivector field $\Pi$ can be viewed as the ``Hamiltonian" function on the shifted cotangent bundle $T^\vee[1]M$ which generates a differential over the polyvector fields, the latter being a cohomologically graded cdga $(PV^\bullet_M, [-,\Pi])$ over $\mathbb R$. To encode this structure in formal geometry, one looks at the de Rham complex of the infinite jets of poly-vector fields. 

The space $\Omega_M\otimes \mathscr J(PV_M)$ is a filtered algebra, whose associated graded objects being a graded commutative algebra over $\Omega_M$. Upon the choice of connection over $T_M$, the $\mathbb R$-linear differential $[-,\Pi]$ over $PV_M$ induces a $C^\infty_M$-linear differential $d_\Pi$ over $\mathscr J(PV_M)$, which in turn, become $\Omega_M$-linear in $\Omega_M\otimes \mathscr J(PV_M)$. 
Further more, as a result of previous analysis, the connection on $T_M$ also identifies $\mathscr J(PV_M)$ with the Chavelley-Eilenberg cochain of $\mathfrak g_M\oplus \mathfrak g_M^\vee[-1]$. So there is an isomorphism between the associated graded cdga of $(\Omega_M\otimes \mathscr J(PV_M), \partial + id\otimes d_\Pi)$ and $C^*(\mathfrak g_M\oplus \mathfrak g_M^\vee[-1])$, the latter with an inherited differential. 

We shall refer to the structures of the cdga and the $l_\infty$ algebra obtained using the information of Poisson bivector field $\Pi$ as the $\Pi$-twisted structure. Indeed, given a Lie algebroid over $M$, there is a way to associate a curved $l_\infty$ algebra, as given by \cite{GG_l}, and the $\Pi$-twisted case is a special example of the more general construction.

\begin{proposition}
Given a smooth Poisson manifold $M$, there is a contractible family of curved $l_\infty$ structure over the dg vector space $\mathfrak g_M\oplus \mathfrak g^\vee_M[-1]$, such that
\begin{itemize}
\item the $l_1$ structure encodes the choice of connections on $T_M$ and the constant part of Poisson bivector field $\Pi$,
\item the $l_n$ structure is given by the n-bracket on $\mathfrak g_M$, the $l_n$-module structure of $ \mathfrak g^\vee_M[-1]$, as well as the geodesic expansion of $\Pi$,
\item there is a pairing of degree $-1$ such that the pairing is symmetric with respect to $l_n$ for each $n\geq1$.
\end{itemize}
\end{proposition}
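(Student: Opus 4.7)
The plan is to build the twisted curved $l_\infty$ structure by bootstrapping from the untwisted one given by the preceding Lemma, and then to import the Poisson bivector as a differential/curvature twist via the standard dictionary between degree $+1$ coderivations on $\mathrm{Sym}(\mathfrak{g}[1])$ and $l_\infty$ operations. The key observation, already flagged in the paragraph immediately above the Proposition, is that the choice of a connection $\nabla$ on $T_M$ together with a splitting of the jet bundle identifies the associated graded of the filtered cdga $(\Omega_M \otimes \mathscr{J}(PV_M), \partial + \mathrm{id} \otimes d_\Pi)$ with the Chevalley--Eilenberg cochain complex $C^*(\mathfrak{g}_M \oplus \mathfrak{g}_M^\vee[-1])$. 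Transferring the differential $\partial + d_\Pi$ across this isomorphism yields a degree $+1$ derivation on $C^*$, whose Taylor components are, by definition, the sought-after operations $\{l'_n\}_{n\geq 0}$.

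Concretely, I would fix $\nabla$ and the jet splitting so that the untwisted tower $\{l_n\}$ from the Lemma is available on $\mathfrak{g}_M \oplus \mathfrak{g}_M^\vee[-1]$, and then expand the Poisson bivector in its geodesic Taylor series $\Pi = \sum_{k\geq 0} \Pi^{(k)}$ relative to $\nabla$. The term $\Pi^{(k)}$, viewed as a jet-valued symbol, contracts against elements of $\mathrm{Sym}^{k}(\mathfrak{g}[1])$ to produce maps into $\mathfrak{g}^\vee[-1] \hookrightarrow \mathfrak{g}_M \oplus \mathfrak{g}_M^\vee[-1]$ using the coadjoint module structure. In particular, $\Pi^{(0)}$ adds a genuine curvature term to $l'_0$, $\Pi^{(1)}$ corrects $l'_1$, and so on. Once the combined tower is written down, the curved $l_\infty$ relations are verified by translating them back into $(\partial + d_\Pi)^2 = 0$ on $\Omega_M \otimes \mathscr{J}(PV_M)$, which in turn decomposes into flatness of $\partial$ (the structure equation of the Grothendieck connection) and the Poisson condition $[\Pi,\Pi]=0$.

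For the degree $-1$ invariant pairing, the Lemma already supplies the $\Omega_M$-linear contraction between $\mathcal{T}_M$ and $\mathcal{T}_M^\vee$; its symmetry with respect to each $l'_n$ amounts to the assertion that the Schouten bracket $[-,\Pi]$ is a derivation of the contraction pairing on polyvector fields, which follows formally from the graded-Leibniz property of the Schouten bracket combined with $\Pi$ itself being an element of the paired objects. Contractibility of the family of structures then reduces to contractibility of the space of pairs (connection on $T_M$, jet splitting), both factors being affine.

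The main obstacle I expect is a coherent bookkeeping of invariance of the pairing across the whole tower rather than bracket-by-bracket: once $\Pi^{(0)}$ introduces a nontrivial $l_0$, one can no longer appeal directly to the uncurved arguments of the Lemma, and the failures $l'_1 \circ l'_0 \neq 0$ must cancel against higher operations in a way that is controlled precisely by the Maurer--Cartan equation $[\Pi,\Pi]=0$. Making this cancellation explicit --- and simultaneously checking that the cyclicity needed for the pairing to remain invariant survives the twist --- is the place where the Poisson condition is used most essentially, and where I would spend most of the technical effort.
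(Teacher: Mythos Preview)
Your overall strategy is correct and essentially matches the paper's: the paper does not give a formal proof of this Proposition but defers to the surrounding discussion (transfer of the differential $\partial + d_\Pi$ on $\Omega_M\otimes\mathscr J(PV_M)$ across the identification with $C^*(\mathfrak g_M\oplus\mathfrak g_M^\vee[-1])$, plus the reference to the Lie-algebroid construction of Grady--Gwilliam). Your plan to read off the operations as Taylor components of the transferred differential, verify the relations via $(\partial+d_\Pi)^2=0$, and deduce contractibility from affineness of the space of connections is exactly the intended route.

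There is, however, a genuine bookkeeping slip in how you index the Poisson contribution. The constant term $\Pi^{(0)}$ does \emph{not} add a curvature to $l'_0$; it supplies the anchor map $\mathfrak h^\vee[-1]\to\mathfrak h$, i.e.\ a new component of $l'_1$, precisely as asserted in the first bullet of the Proposition and reiterated in the proof of the next one (``there is a component of $l_1$ operation given by the anchor map $\mathfrak h^\vee[-1]\to\mathfrak h$''). More generally, in the paper's conventions the $k$-th Taylor coefficient of $\Pi$ yields the operation $\Pi_{k+1}:\mathfrak h^\vee[-1]\otimes\mathfrak h^{\otimes k}\to\mathfrak h$ appearing in the action as $\langle\eta,\Pi_n(\eta,X^{\otimes(n-1)})\rangle_1$; the output lands in $\mathfrak h$, not in $\mathfrak h^\vee[-1]$ as you wrote. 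Once this shift is corrected, your final worry evaporates: the curving $l'_0$ is untouched by the Poisson twist, so there is no new ``$l'_1\circ l'_0$'' obstruction to manage, and the invariance of the pairing with respect to the $\Pi$-operations reduces cleanly to the cyclic symmetry of $\langle\eta,\Pi_n(\eta,X,\ldots,X)\rangle_1$ in its two $\eta$-slots, which is just antisymmetry of the bivector.
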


Note that the shifted symplectic pairing, if understood as a pairing on the space of field $\mathcal E$, has degree $1$, which induces a homotopy $\mathbb P_2$-structure (or, Gerstenharber algebra) on the de Rham complex of $\mathscr J(PV_M)$. Classical Poisson is $\mathbb P_1$, the passing from that to the $\mathbb P_2$ structure in $PV_M$ has been used to define the notion of {\it center} in \cite{Safranov}. In the classical version, the center computes Poisson cohomology for $(M,\Pi)$, and our $l_\infty$ algebra is a formal geometry version.

\begin{proposition}
There exists a contractible family of curved $l_\infty$ algebra structures on $\mathfrak g:=\Omega_M\otimes(\mathcal T_M[-1]\oplus \mathcal T^\vee_M) $, whose Chevalley-Eilenberg cochain $C^*(\mathfrak g)$ computes the Poisson cohomology of $M$.
If the Poisson manifold $M$ is symplectic, the $\Pi$-twisted curved $l_\infty$ algebra   has its cohomology algebra isomorphic to $H^*(M,\mathbb R)$.
\label{targetalgebra}
\end{proposition}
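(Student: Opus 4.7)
The plan is to leverage the preceding proposition, which already furnishes the contractible family of curved $l_\infty$ structures on $\mathfrak{g} = \mathfrak{g}_M \oplus \mathfrak{g}_M^\vee[-1]$ together with its $(-1)$-shifted invariant pairing; what remains is to identify the cohomology algebra of $C^\ast(\mathfrak{g})$ with Poisson cohomology $H^\ast(PV_M, [-,\Pi])$, and then to compute it in the symplectic case.

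First, I would unwind the Chevalley-Eilenberg cdga concretely. As indicated in the discussion preceding the proposition, a choice of connection on $T_M$ trivializes $\mathscr{J}(PV_M)$ as a pro-symmetric algebra on appropriate shifts of $\mathfrak{g}_M^\vee$ and $\mathfrak{g}_M$, thereby identifying the associated graded cdga of $(\Omega_M \otimes \mathscr{J}(PV_M), \partial + d_\Pi)$ with $C^\ast(\mathfrak{g})$. Under this identification, the Grothendieck connection $\partial$ contributes the linear part $l_1$ together with the higher brackets $\{l_n\}_{n \ge 2}$ encoding the $l_\infty$ structure on $\mathfrak{g}_M \oplus \mathfrak{g}_M^\vee[-1]$, while the jet-prolonged Poisson differential $d_\Pi$ contributes the curvature and higher operations $\{\Pi_n\}_{n \ge 1}$ obtained from the geodesic Taylor expansion of $\Pi$.

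Next, I would invoke the standard formal-geometry quasi-isomorphism (Gelfand-Kazhdan descent): jet prolongation realizes $PV_M$ as the flat sections of $(\Omega_M \otimes \mathscr{J}(PV_M), \partial)$, and since this inclusion intertwines $[-,\Pi]$ with its jet-prolonged version $d_\Pi$, it is a quasi-isomorphism of cdgas between $(PV_M, [-,\Pi])$ and $(\Omega_M \otimes \mathscr{J}(PV_M), \partial + d_\Pi)$. Passing to cohomology then yields $H^\ast(C^\ast(\mathfrak{g})) \cong H^\ast(PV_M, [-,\Pi])$ as graded algebras, which is Poisson cohomology by definition.

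For the symplectic case, I would invoke the classical Lichnerowicz theorem: contraction with the symplectic form $\omega = \Pi^{-1}$ produces a degree-preserving isomorphism $\Omega_M^\bullet \to PV_M^\bullet$ which intertwines $d_{\mathrm{dR}}$ with $[-,\Pi]$ and respects the graded commutative product, since it is induced from a fiberwise isomorphism of vector bundles. Combined with the previous step, this yields the asserted algebra isomorphism $H^\ast(C^\ast(\mathfrak{g})) \cong H^\ast(M,\mathbb{R})$. The main obstacle I anticipate lies in the first step: verifying that the $\Pi$-twisted part of the $l_\infty$ cochain differential on $C^\ast(\mathfrak{g})$ matches the jet-prolonged $d_\Pi$ on the nose, rather than merely up to homotopy. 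This requires a careful term-by-term comparison between the geodesic-exponential coefficients $\Pi_n$ of the preceding proposition and the Chevalley-Eilenberg expansion of $[-,\Pi]$; once this identification is in place, the rest reduces to standard formal geometry.
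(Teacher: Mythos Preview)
Your proposal is correct and follows essentially the same route as the paper: both identify $C^\ast(\mathfrak g)$ with $(\Omega_M\otimes\mathscr J(PV_M),\partial+d_\Pi)$, reduce to Poisson cohomology via the flat-section quasi-isomorphism for the Grothendieck connection, and then invoke Lichnerowicz in the symplectic case. The only cosmetic difference is that the paper packages the flat-section step as a spectral sequence (filtering by de Rham degree so that the $E_1$ page is the $\partial$-cohomology, yielding $PV_M$, and the $E_2$ page is Poisson cohomology), whereas you cite the quasi-isomorphism $PV_M\hookrightarrow(\Omega_M\otimes\mathscr J(PV_M),\partial)$ directly; these are the same argument.
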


\begin{proof}  We have seen that on $\Omega_M\otimes\mathcal T_M[-1]\equiv \mathfrak h$ there are contractible choices of $l_\infty$ algebra structure, in one-to-one correspondence to the choice of connections on $T_M$. Further more, this determines a connection on $T^\vee_M$, there is an $l_\infty$-module structure on $\mathfrak h^\vee[-1]$. The desirable $l_\infty$ algebra structure on $\mathfrak g\cong \mathfrak h \oplus \mathfrak h^\vee[-1]$ comes from a minimal extension of $\mathfrak h$ by module $\mathfrak h^\vee[-1]$, further twisted by the Poisson structure. I.e., there is a component of $l_1$ operation given by the anchor map $\mathfrak h^\vee[-1]\to \mathfrak h$.

The result about $C^*(\mathfrak g) \cong (\Omega_M\otimes \mathscr J(PV), \nabla+ \{\mathscr J(\Pi),-\})$ is sheaf-theoretical, hence it suffices to stick to local computation. Over any contractible open subset $U\subset M$, we fix the trivialisation of $T^\vee_M$ and $T_M$, and choose the compatible local coordinates $\{x^i, y^i\}_{i\in I}$, $\{x^i, z_i\}_{i\in I}$ respectively. The bracket structure $\{-,-\}_0$ is inherited from the Schouten bracket on $PV$, which locally is given by $\sum_i  \left(\overleftarrow{\frac\partial{\partial y^i}}\cdot \overrightarrow{\frac\partial{\partial z_i}} - \overleftarrow{\frac\partial{\partial z_i}}\cdot \overrightarrow{\frac\partial{\partial y^i}}\right)$. The complex \[ \big(\Omega_M\otimes \mathscr J\otimes PV, dx^i(\frac\partial{\partial x^i}-\frac\partial{\partial y^i})+  \sum_i \Pi^{ij}(x,y) z_i\frac\partial{\partial y^j}\big)\] is equipped with a bigrading by the form degree on $M$ and by the exterior power in the polyvector fields.

There exists a spectral sequence with the $E_1$ page given by 
\[E_1^{p,q}:= H^{p}\big( \Omega^p_M\otimes \mathscr J\otimes PV^q, dx^i(\frac\partial{\partial x^i}-\frac\partial{\partial y^i})\big),\] which computes Poisson cohomology of $M$ at $E_2$ page. On the other hand, the spectral sequence abuts to the total cohomology. 
In case of the symplectic manifold, the Poisson cohomology is isomorphic to $H^*(\Omega(M), d)$.

\end{proof}

\subsection{On boundary conditions}
\label{classicalbvthy}

Let $\Sigma$ be the two-dimensional worldsheet. Given the target $l_\infty$ algebra as described in Prop.~\ref{targetalgebra}, the space of field for our model is $\mathcal E=\Omega_\Sigma\otimes_{\mathbb R}\mathfrak g[1]$, which we shall also denote by $\Omega_\Sigma\otimes_{\mathbb R} \mathfrak h[1] \oplus \Omega_\Sigma\otimes_{\mathbb R}\mathfrak h^\vee$, following the notation used in the proof of Prop.~\ref{targetalgebra}. The shifted space of field $\mathcal E[-1]$ obtains an $l_\infty$ algebra structure by scalar extension. There is a degree $-1$ symplectic pairing on  $\mathcal E$ (viewed as symplectic complete bornological vector spaces \cite{KM, CG}) given by 
$$ (\alpha\otimes a)\otimes (\beta\otimes b)\mapsto \int_\Sigma \alpha\wedge \beta  \langle a, b\rangle_{1},$$
where $\langle-,-\rangle_1$ is the natural $ \Omega_M$-linear pairing between $ \Omega_M\otimes \mathcal T_M$ and $ \Omega_M\otimes \mathcal T^\vee_M$. 
This is an invariant paring with respect to the $l_\infty$ structure in the following sense. 

\begin{definition}
Let $(K,d)$ be a cdga over a field of character zero. Suppose that $\mathfrak h$ is a (curved) $l_\infty$ algebra over a cdga $(K,d)$. A pairing of degree $s$ is an invariant pairing if for all $n>0$, the linear map 
\[ \mathfrak g^{\otimes n}\to (K,d): (v_0, v_1,\cdots, v_{n-1})\mapsto \langle v_0, l_{n-1}(v_1,\cdots, v_{n-1})\rangle \]
is a graded skew-symmetric map of chain complexes. 
\end{definition}

\begin{remark} If the base cdga is a field of character zero, as in most physical setting, the obstruction to the graded anti-symmetrisation of the above map would be strictly zero. However, in the dg case, the obstruction could well be a coboundary of $(K,d)$. \end{remark}

If $\partial \Sigma = \emptyset$, this pairing induces the antibracket $\{-,-\}_0$ on the functionals, which can be viewed as $ \mathcal O_{\mathcal E}\cong (C^*(\mathcal E[-1]), \delta^{BV})$, and the CE differential $\delta^{BV}$ is the Hamiltonian vector field defined by the action functional $S$. Hence the equation of motion $ d\phi +\sum_n \frac1{n!}l_n(\phi^{\times n})  = 0$ is the Maurer-Cartan equation. So the classical, boundary-less BV theory can be concluded by
$$\delta^{BV}\circ \delta^{BV} = 0,\quad \text{where\,}\delta^{BV} f= \{S,f\}_0, \forall f\in \mathcal O_{\mathcal E}.$$

When the worldsheet $\Sigma$ has nonempty boundaries, we need a version of BV theory with boundaries\cite{CMR, CW}. The symplectic pairing $\int_\Sigma\langle-,-\rangle_1$ defined above is not compatible with the de Rham operator on $\Sigma$, and hence fails to be symmetric on $\mathcal E[-1]$. 
For this reason, the classical action $S$ fails to be the Hamiltonian function for $\delta^{BV}$, and hence the classical master equation bas to be modified 
$$ \delta^{BV}\circ \delta^{BV} = 0,\quad \delta^{BV} f= \{S,f\}_0+{\rm something\, on\,boundary}, f\in \mathcal O_{\mathcal E}.$$

Indeed, for general topological theories of AKSZ type (i.e., the extended $l_1$ operation on $\mathcal E[-1]$ contains the de Rham operator $d$ on $\Sigma$) we have the following result.
\begin{proposition}
Let $\Sigma$ be a manifold with boundaries, and let $\mathfrak g$ be a curved $l_\infty$ algebra over $(K,d^K)$, with operations $\{l_n\}_{n\geq0}$ and a shifted symplectic pairing $\langle-,-\rangle_*$of degree ${\rm dim}(\Sigma)-1$.
Let $S$ be the classical topological BV theory on $\Sigma$ as constructed by
\[S[\phi] = \int_\Sigma \langle \phi, \frac12d\phi+\sum_{n\geq0}\frac1{n!} l_n(\phi^{\otimes n})\rangle_{*}, \quad\phi\in\Omega_\Sigma\otimes\mathfrak g[1].\] The followings are equivalent.
\begin{enumerate}
\item The pairing $\langle-,-\rangle_*$ is invariant over $\mathfrak g$.
\item The pairing $\int_\Sigma\langle-,-\rangle_*$ is invariant with respect to the extended $l_\infty$ operations $\{ 1\otimes l_0, d\otimes id+ id\otimes l_1, id\otimes l_n\}_{n\geq2}$ on $\Omega_\Sigma\otimes\mathfrak g$ up to a boundary term over $\partial\Sigma$.

\item The action functional $S$ is an element in $C^*(\Omega_\Sigma\otimes\mathfrak g)$, whose variation is generated by the Maurer-Cartan functional up to a boundary contribution.
\end{enumerate}
\end{proposition}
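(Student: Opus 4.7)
The plan is to establish the three implications cyclically, $(1)\Rightarrow(2)\Rightarrow(3)\Rightarrow(1)$, since each step only needs ``one direction'' of the relevant identity. The only analytic input is Stokes' theorem; the rest is bookkeeping of Koszul signs under the scalar extension of $\mathfrak g$ to $\Omega_\Sigma\otimes\mathfrak g$ and the identification $\mathcal O_{\mathcal E}\cong C^*(\Omega_\Sigma\otimes\mathfrak g)$.

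For $(1)\Rightarrow(2)$ I would expand the extended pairing as
\[
\int_\Sigma\langle \alpha\otimes a,\,\beta\otimes b\rangle_* \;=\; \int_\Sigma \alpha\wedge\beta\,\langle a,b\rangle_*,
\]
and test invariance on each of the extended operations. For the purely algebraic operations $1\otimes l_0$ and $\mathrm{id}\otimes l_n$ with $n\geq 2$, multilinearity reduces the required symmetry to that of $\langle a, l_{n-1}(b_1,\dots,b_{n-1})\rangle_*$ on $\mathfrak g$, which is (1). The only nontrivial piece is the twisted unary operation $d\otimes\mathrm{id}+\mathrm{id}\otimes l_1$: here the algebra part is again (1), while the de~Rham piece produces
\[
\int_\Sigma d\alpha\wedge\beta\,\langle a,b\rangle_*\pm\int_\Sigma\alpha\wedge d\beta\,\langle a,b\rangle_* \;=\;\int_{\partial\Sigma}\alpha\wedge\beta\,\langle a,b\rangle_*
\]
by Stokes, which is precisely the allowed boundary term. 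Degree counting against $\dim\Sigma-1$ is what forces this pairing to become $(-1)$-symplectic on $\mathcal E$.

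For $(2)\Rightarrow(3)$ I would observe that, because $\langle-,-\rangle_*$ induces a linear isomorphism between $\mathcal E$ and its dual in the appropriate degree, the functional $S$ can be read off as the dual of the Maurer--Cartan vector field $Q=d\otimes\mathrm{id}+\sum_n\tfrac1{n!}\,\mathrm{id}\otimes l_n(\phi^{\otimes n})$, so $S$ tautologically lies in $C^*(\Omega_\Sigma\otimes\mathfrak g)$. Invariance of the extended pairing up to a boundary integral then implies, by a direct expansion of $\delta S$ and reassembly of the $(n+1)$ cyclic terms into a single $l_n$-term, that $\delta S$ equals contraction with the MC functional modulo an integral over $\partial\Sigma$. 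For $(3)\Rightarrow(1)$ I would specialise to fields supported in the interior of $\Sigma$: the boundary contribution then vanishes identically, and the remaining identity is precisely the graded cyclic symmetry of $\langle v_0,l_{n-1}(v_1,\dots,v_{n-1})\rangle_*$, i.e.\ invariance of $\langle-,-\rangle_*$ on $\mathfrak g$. A density argument on the $\Omega_\Sigma$-factor removes the dependence on the bump functions.

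The main obstacle, as the remark preceding the statement already warns, is the sign/coboundary bookkeeping over a nontrivial base cdga: the graded skew-symmetry in the definition of invariance only needs to hold up to a $d^K$-coboundary, so one must be careful that the Stokes boundary term and the $d^K$-coboundary are not confused with one another. Concretely, in the extended operations $d$ acts on $\Omega_\Sigma$ while $d^K$ is carried along the $\mathfrak g$-factor, and the compatibility of these two differentials with the pairing must be checked separately before the cyclic argument can close.
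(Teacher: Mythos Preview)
Your approach is essentially the same as the paper's: separate the purely algebraic operations (where invariance on $\mathfrak g$ does all the work) from the de~Rham piece (where Stokes produces the boundary term), and then link invariance to the fact that each homogeneous term of $S$ is a \emph{symmetric} multilinear functional, i.e.\ an element of $C^*(\Omega_\Sigma\otimes\mathfrak g)$. The paper organises this as direct equivalences $(1)\Leftrightarrow(2)$ and $(2)\Leftrightarrow(3)$ rather than a cycle, but the content is the same, and your $(3)\Rightarrow(1)$ via interior-supported fields is exactly the ``arbitrariness of $\delta\phi$ and $\phi$'' the paper invokes.

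One imprecision to fix in $(2)\Rightarrow(3)$: the claim that $S$ ``tautologically lies in $C^*(\Omega_\Sigma\otimes\mathfrak g)$'' is not tautological. A priori $\langle\phi,\tfrac1{(n+1)!}l_n(\phi^{\otimes n})\rangle_*$ is only symmetric in its last $n$ arguments; membership in $\widehat{\rm Sym}$ is \emph{equivalent} to full (graded skew-)symmetry, which is precisely the invariance hypothesis $(2)$ you are assuming. The paper makes this point explicitly. So the logical order in your $(2)\Rightarrow(3)$ should be: invariance of the extended pairing $\Rightarrow$ total symmetry of each homogeneous term $\Rightarrow$ $S\in C^*$, and then the variational computation with Stokes gives the MC functional plus boundary term.
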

\begin{proof} By linearity, if $\langle l_n(-,\cdots,-),-\rangle_*$ is graded skew-symmetric over $\Omega_\Sigma\otimes K$, so is $\int_\Sigma \langle l_n(-,\cdots,-),-\rangle_*$ over $\mathbb R\otimes K$. The only possible obstruction to an invariant pairing comes from the de Rham operator $d$, as $\int_\Sigma\langle d(-), -\rangle_*\pm \langle-, d(-)\rangle_* = \int_{\partial\Sigma}\langle-, -\rangle_*$. 

Consider the homogeneous term $ \int_\Sigma \langle \phi,\frac1{(n+1)!} l_n(\phi^{\otimes n})\rangle_* $ in $S$. The term being graded skew-symmetric with respect to $\Omega_\Sigma\otimes\mathfrak g$ if and only if it is an element in 
\[\widehat{\rm Sym}_{\boxtimes}(\Omega'_\Sigma\otimes \mathfrak g^\vee[-1])\cong C^*(\Omega_\Sigma\otimes\mathfrak g).\]
Due to the symmetry and the non-degeneracy on $\langle-,-\rangle_*$, the variation of $S$ leads to the Maurer-Cartan functional in the bulk of $\Sigma$. The only term that could possibly introduce a boundary integration involves de Rham operator $d$:
\[\delta \int_\Sigma \langle \phi, \frac{1}2 d\phi \rangle  = \int_\Sigma \langle \delta\phi, \frac{1}2 d\phi \rangle\pm \int_\Sigma \langle \phi, \frac{1}2 \delta d\phi \rangle.\]

Using the integration-by-part formula as given above, we have that \[\delta \int_\Sigma \langle \phi, \frac{1}2 d\phi \rangle  = \int_\Sigma \langle \delta\phi,  d\phi \rangle\pm \frac12 \int_{\partial\Sigma} \langle \phi,  \delta \phi \rangle,\]
hence
\[\delta S[\phi] = \int_\Sigma \langle \delta \phi , \sum_n \frac1{n!}l_n(\phi^{\otimes n})\rangle \pm \frac12 \int_{\partial\Sigma} \langle \phi,  \delta \phi \rangle. \]

On the other hand, suppose that $\delta S$ leads to a boundary contribution, then this has to come from the term with de Rham operator $d$. Now from the arbitrariness of $\delta \phi$ and $\phi$, we deduce that the skew self-adjointness of $d$ with respect to $\int_\Sigma$ is obstructed by a boundary term. 
\end{proof}

A simple fix of the problem is to consider a subspace of fields $\mathcal E^b\subset \mathcal E$ on which $\int_\Sigma\langle-,-\rangle_1$ is compatible with de Rham differential. More precisely, the subspace $\mathcal E^b$ needs to satisfies the following conditions.
\begin{itemize}
\item The subspace $\mathcal E^b[-1]$ inherits the $l_\infty$ structure. In particular, the de Rham differential $d$ must preserve $\mathcal E^b$.
\item The un-shifted pairing $\int_{\partial\Sigma}\langle-,-\rangle_1$ on $\mathcal E^b$ has to vanish, i.e., we shall impose certain boundary condition to a coisotropic subspace of $\mathcal E$. 
\end{itemize}

The first constraint can only be solved by Dirichlet boundary condition, so that is what we shall apply. The remaining constraints lead to choices in need of specifying the geometric structure of $M$ as well as a connection. While we would like the theory to produce interesting invariant structures for general Poisson manifolds, such choices are not desirable. So we are left with only one universal boundary condition: $\mathcal E^b\cong\Omega_\Sigma\otimes \mathfrak h[1]\oplus \Omega_{\Sigma,D}\otimes \mathfrak h^\vee$, with the natural pairing $\int_\Sigma \langle -,-\rangle_1$.

\section{BV quantization for theory with boundaries}
\label{quant}

\subsection{The propagators over configuration space}
\label{BV:prop}

For topological field theories, the propagator is a smooth form over the two-point configuration space, $\overline{{\rm Conf}(2,\Sigma)}$. This is also true when the boundary of $\Sigma$ is nonempty. As we shall see, the construction combines the results of Axelrod-Singer and Kontsevich \cite{AS94, Kont97}. The boundary condition is satisfied using a mirror charge method.

Let $\Sigma$ denote a Riemann surface with genus $g$ and $n\geq1$ boundary components. We shall first construct the diagonal class $\Xi$ as a smooth differential form over $\Sigma\times\Sigma$, which is used in defining the quantum BV laplacian and the propagator. We have that ${\rm dim} \, H^*(\Sigma) = 1+(2g+n-1)+0$, ${\rm dim} \, H^*(\Sigma,\partial\Sigma) = 0+(2g+n-1)+1 $ and ${\rm dim} \, H^*(\partial\Sigma) =2n$. It is easy to specify the representatives for those cohomology classes. In the following table we shall list those representatives, whose upper index denotes the form degree, and the lower index labels the base.
\begin{center}
\begin{tabular} {c|c|c}
$H^0(\Sigma) : \{\alpha^0\}$,&$H^1(\Sigma): \{\alpha^1_i\}_{i=1}^{n-1} \cup \{\gamma^1_{k}\}_{k=1}^{2g}$,& $H^2(\Sigma):\emptyset$ \\[5mm]
$H^0(\Sigma,\partial\Sigma) : \emptyset$,& $H^1(\Sigma,\partial\Sigma): \{d\beta^0_i\}_{i=1}^{n-1} \cup \{\gamma^1_{k}\}_{k=1}^{2g}$,& $H^2(\Sigma,\partial\Sigma):\{d\beta^1\}$ \\[5mm]
$H^0(\partial\Sigma) : \{\alpha^0\}\cup \{\beta^0_i\}_{i=1}^{n-1}$, &$H^1(\partial\Sigma):  \{\alpha^1_i\}_{i=1}^{n-1}\cup\{\beta^1\}$ &
\end{tabular}
\end{center}

In the above expression, we have already identified the classes under the map \[\cdots\to H^*(\Sigma,\partial\Sigma)\to H^*(\Sigma)\to H^{*}(\partial \Sigma)\to H^{*+1}(\Sigma,\partial\Sigma)\to\cdots.\]

So the diagonal form is entirely fixed by the Lefschtz-Poincare duality, which gives
\[\int_{\partial\Sigma} \alpha^0\wedge \beta^1=\int_{M} \alpha^0\wedge d\beta^1 = 1,\quad \int_{\partial\Sigma} \alpha_i^1\wedge \beta_j^0 =-\int_{M} \alpha_i^1\wedge d\beta_j^0= \delta_{ij} \]
\[\int_\Sigma \gamma^1_k\wedge \gamma^1_l= \omega_{k,l},\]
where $\omega_{k,l}$ is the sigh difference for the standard symplectic form on the first cohomology group of closed Riemann surfaces.

With the above data, we can construct a cocycle dual to the diagonal class, which can also be viewed as a projection operator projecting a form orthogonally onto the representatives of cohomology. Unlike the boundary-less case, we have two types of cohomology: namely $H^*(\Sigma,\partial\Sigma)$ and $H^*(\Sigma)$, so the class is an element in $\Omega_\Sigma\boxtimes\Omega_{\Sigma,\partial}$, which is given by
\[  \Xi := \omega_{k,l}\pi_1^* \gamma^1_k \otimes \pi_2^* \gamma^1_l+ \pi_1^* \alpha^0\otimes \pi_2^* d\beta^1+\pi_1^*\alpha_i^1\otimes \pi_2^* d\beta^0_i.\]

In this section and later we shall use extensively the configuration spaces of points in the worldsheet under Fulton-MacPherson compactification \cite{FMP}. Here our notation is as follows.
\begin{itemize}
\item The naive configuration space of $n$ points in $\Sigma$ is given by the Cartesian product $\Sigma^{\times n}$.
\item The configuration space of $n$ distinct points is denoted by ${\rm Conf}(n, \Sigma)$, which is $\Sigma^{\times n}$ deleting all the diagonal ideals
\item The FMP compactified configuration space is denotes by $\overline{{\rm Conf}(n, \Sigma)}$.
\item The boundary and corner strata of the configuration space are caused by two reasons: some points collides, or some hit the boundary. Suppose the set(s) $S$ contains all the points that collides to a single point, and $M$ contains the points that hit the boundary. We shall use $\partial_{S,M}\overline{{\rm Conf}(n, \Sigma)}$ to denote the corresponding strata.
\item When it is not necessary to specify the actual elements in $S$ and $M$, and when $S\cap M=\emptyset$, we shall just mention their cardinality.
\end{itemize}

\begin{theorem}
Suppose that $\Sigma$ is a compact Riemann surface with boundaries. Fix an embedding of $\Sigma$ into a finite dimensional Euclidean space with $\partial \Sigma$ as complete geodesic. The de Rham differential $d$ over the smooth forms $\Omega(\Sigma)$ (together with the obvious pairing using wedge product) has a parametrix $P^{an}\in \Omega^{1}(\overline{{\rm Conf}(2, \Sigma)})$ upon choosing the connection of $T_\Sigma$, which satisfies the following properties
\begin{enumerate}
\item $P^{an}|_{\partial_{2,0} \overline{{\rm Conf}(2, \Sigma)}}= \omega_\theta+bd^*  \alpha$ where $\omega_\theta$ is a fiber-wise volume form on the sphere bundle $Sph(T_\Sigma)$, and $\alpha$ is an $1$-form on the naive configuration space $\Sigma\times \Sigma$. 
\item The form $dP^{an}$ descends along the blow-down map, and $dP^{an} =  \Xi$ as $2$-form on $\Sigma\times \Sigma$, where $ \Xi$ is a cocycle dual to the diagonal class in $\Sigma\times\Sigma$.
\item $P$ satisfies Dirichlet boundary condition on one copy of $\Sigma$ inside $\overline{{\rm Conf}(2, \Sigma)}$.
\end{enumerate}
\label{propthm}
\end{theorem}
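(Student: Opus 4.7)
My plan is to combine the Axelrod--Singer parametrix construction on a closed manifold with Kontsevich's method of image charges to enforce the Dirichlet boundary condition, and then verify smoothness on the Fulton--MacPherson compactification. First I would let $\tilde\Sigma$ denote the orientation-reversing double of $\Sigma$ along $\partial\Sigma$. The hypothesis that $\partial\Sigma$ is a complete geodesic of the chosen Euclidean embedding guarantees that the induced Riemannian metric extends smoothly across $\partial\Sigma$, so $\tilde\Sigma$ is a closed Riemannian surface carrying a smooth isometric involution $\tau$ with fixed-point set $\partial\Sigma$, and the fixed connection on $T_\Sigma$ extends $\tau$-equivariantly to $T_{\tilde\Sigma}$.

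Next I would apply the Axelrod--Singer construction on $\tilde\Sigma$: let $G$ be the Green operator of the Hodge Laplacian and $H$ the projector onto harmonic forms, so that the Schwartz kernel $\tilde P$ of $d^*G$ is a $1$-current on $\tilde\Sigma\times\tilde\Sigma$ satisfying $d\tilde P=\delta_{\mathrm{diag}}-H$ in the distributional sense. The key analytic input from \cite{AS94} is that $\tilde P$ pulls back along the blow-down map to a smooth $1$-form on $\overline{{\rm Conf}(2,\tilde\Sigma)}$ whose restriction to the exceptional divisor is the fiberwise normalized volume form $\omega_\theta$ on the sphere bundle $Sph(T_{\tilde\Sigma})$, plus a globally defined correction $bd^*\alpha$ encoding the connection choice.

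I would then define
\[
P^{an}(x,y):=\pi^*\tilde P(x,y)-\pi^*\tilde P(x,\tau y)
\]
on $\overline{{\rm Conf}(2,\Sigma)}$. Antisymmetry in $y\mapsto \tau y$ forces the tangential component of $P^{an}$ to vanish on $\{y\in\partial\Sigma\}$, giving property (3). On the diagonal stratum $\partial_{2,0}\overline{{\rm Conf}(2,\Sigma)}$, the mirror summand $\pi^*\tilde P(x,\tau y)$ stays smooth (points and their mirrors remain separated), so property (1) descends directly from the Axelrod--Singer asymptotics. For property (2), the antisymmetrization converts harmonic representatives in the second factor into relative cohomology representatives: under the long exact sequence of $(\Sigma,\partial\Sigma)$ the classes $\{\alpha^0,\alpha^1_i\}$ on the first factor pair with their Lefschetz--Poincar\'e duals $\{d\beta^1,d\beta^0_i\}$ in $H^*(\Sigma,\partial\Sigma)$, while the bulk $\tau$-invariant cycles $\gamma^1_k$ pair with themselves via $\omega_{k,l}$; this is precisely the cocycle $\Xi$ written in the statement.

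The main obstacle, and the step requiring the most care, is verifying that $P^{an}$ extends smoothly to every boundary/corner stratum of $\overline{{\rm Conf}(2,\Sigma)}$. Generic smoothness at the bulk-collision stratum is covered by Axelrod--Singer on $\tilde\Sigma$, and smoothness at the stratum where one point approaches $\partial\Sigma$ (away from the other) follows from the smoothness of $\tilde P$ together with the image construction. The delicate case is the stratum where two points collide on $\partial\Sigma$: here in $\tilde\Sigma$ the configuration $(x,y,\tau y)$ degenerates to a triple coincidence, and one must perform the nested blow-up of this locus and check that both summands match the Kontsevich half-plane model asymptotically. The geodesic boundary hypothesis is exactly what is needed so that the Euclidean model of \cite{Kont97} provides the leading behavior, and the remaining correction descends as a smooth form; this reduces the global verification to a uniform local estimate in Fermi coordinates along $\partial\Sigma$.
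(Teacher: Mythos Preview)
Your overall strategy---build the parametrix on the double $\tilde\Sigma$ and antisymmetrize in the second variable to impose Dirichlet---matches the paper's mirror-charge method exactly. The genuine difference is in how the parametrix on the closed double is produced. You invoke the Hodge-theoretic Axelrod--Singer kernel, taking $\tilde P$ to be the Schwartz kernel of $d^*G$ for the Green operator $G$; the paper instead follows the Bott--Cattaneo route, writing down the global angular form $\omega_\theta=\frac{1}{2\pi}\frac{xdy-ydx}{x^2+y^2}-\frac{\theta}{2\pi}$ on the sphere bundle, pulling it back via the exponential map, and multiplying by a radial cutoff $\rho$. The paper's form then only satisfies $dP^{an}\sim\Xi$ cohomologically, so a final exact correction $d\alpha$ is added by hand (with the check that $\alpha$ can be chosen Dirichlet in the second slot).

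Each route buys something. Your Green-function construction yields $dP^{an}$ equal to the harmonic projector immediately, so the identification with $\Xi$ is cleaner and no ad hoc $\alpha$ is needed; but it requires the analytic input that the Hodge parametrix lifts smoothly to the compactification, and---as you correctly flag---the triple-collision corner where $x,y,\tau y$ coalesce on $\partial\Sigma$ demands a genuine local computation. The paper's cutoff construction is more elementary: smoothness on the compactification is manifest because everything is a pullback of an explicit smooth form times a bump function, so the corner issue is largely sidestepped. More importantly for the rest of the paper, the Bott--Cattaneo form makes the gauge choices $(\theta,\rho,\Xi)$ completely explicit as independent parameters, and these are precisely what get promoted to $\mathbb{I}$-families in the proof of the parametrised QME (Theorem~\ref{thmqme}). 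In your version the gauge freedom is buried in the metric on $\tilde\Sigma$, so if you pursue this approach you should note that the later family argument would need to be rephrased accordingly.
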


The approach extends results of Kontsevich and Axelrod-Singer to the boundary case, and the key procedure is as follows. The two-point configuration space, under FMP compactification, has a boundary strata keeping tracking of the process of points collapsing, which can be identified with the sphere bundle associated to the tangent bundle $T_\Sigma$ over the diagonal ideal $\Sigma\to \Sigma\times\Sigma$. We shall take the standard volume form on each fiber, and {\it glue} them over the whole sphere bundle $Sph(T_\Sigma)$, and then extend to the bulk of the configuration space. The gluing procedure is done by specifying the connection on $Sph(T_\Sigma)$ and take the horizontal form (with respect to the underlying principal bundle). Note that alternatively there are constructions with the requirement that $\Sigma$ being a parallelizable manifold, and then a trivialisation of $T_\Sigma$ would be assumed. For closed 2 manifolds, this is not possible due to the presence of $c_1$. For boundary case, the obstruction vanishes. However, to deal with the boundary, we shall need the double of the manifold $D\Sigma$, which is again closed. So in the following, we shall not assume the parallelizable structure on $\Sigma$, but use the connection explicitly similar to the procedure in \cite{BC98}. 

Firstly, consider the frame bundle over $\Sigma$ given by $\coprod_{x\in \Sigma} {\rm Iso}(\mathbb R^2, T_x\Sigma)$. For each $x\in \Sigma$, the isomorphism set has a free and transitive $GL(2)$ action from right, and is equivalent to $GL(2)$ set-theoretically. We now have a torsor structure on $Fr_\Sigma\equiv\coprod_{x\in \Sigma} {\rm Iso}(\mathbb R^2, T_x\Sigma)$. And with further refinements, this is identified with the frame bundle. The sphere bundle, considered as a boundary strata of configuration, is given by the associate bundle to the orthogonal frames $OFr_\Sigma\times_{SO(2)}S^1$. 

To construct a one-form on the sphere bundle form which restricts to volume form fiberwise, one consider firstly a $SO(2)$-invariant form on $OFr_\Sigma\times S^1$, which descends to the base along $OFr_\Sigma\times S^1\to OFr_\Sigma\times_{SO(2)} S^1$. 

The standard volume form on $S^1\subset \mathbb R^2$ is given by $\omega=\frac{xdy-ydx}{2\pi}$ (so that $\int_{x^2+y^2=1} \frac{xdy-ydx}{2\pi} = 1$), and the $so(2)$ action corresponds to a vector field $\xi=x\frac{\partial}{\partial y}-y\frac{\partial}{\partial x}$, which corresponds to the off-diagonal unitary matrix $X=\left(\begin{array}{cc}0 & 1 \\-1 & 0\end{array}\right)$. Now \[\mathcal L_{\xi}\omega = d\iota_\xi \omega = d(\frac{x^2+y^2}{2\pi} )=0,\] and \[\iota_\xi \omega = \frac1{2\pi}.\]  The vector field $\xi$ preserves the volume form. The connection for $OFr_\Sigma$ is a $so(2)$-valued one-form $\theta\otimes X$ on $OFr_\Sigma\times_{SO(2)} S^1$, and the induced connection on the sphere bundle is then given by the pullback of $\theta \otimes \xi$ locally via some sections in $\Gamma(\Sigma,OFr_\Sigma\times_{SO(2)} S^1)$. Since $so(2)$ is one-dimensional, the connection $\theta \otimes X$ satisfies \[\mathcal L_{\xi} \theta = 0\] which coms from the $SO(2)$-equivariance, and \[\iota_\xi \theta = 1\] which is due to the identity on vertical vectors. 

The form $\omega-\frac\theta{2\pi}$ can be shown to be a basic form on $OFr_\Sigma\times S^1$ by previous calculation, and therefore descends to $OFr_\Sigma\times_{SO(2)} S^1$. The fiber integration can be done on $OFr_\Sigma\times S^1$, so this gives $\int_{S^1} \omega-\frac\theta{2\pi} = \int_{S^1}\omega=1$. Further more, the form is invariant with respect to the antipodes action of the circle. Finally, to extend the form to the neighbourhood of zero section of $T_\Sigma$, and we denote the resulting form by \[\omega_\theta\equiv \frac1{2\pi}\frac{xdy-ydx}{x^2+y^2}-\frac\theta{2\pi}.\] 

\begin{proof} 

Choose a metric on $\Sigma$ such that the boundary $\partial \Sigma$ is complete geodesic. Now consider the local diffeomorphism
\[f: Nbd_\epsilon(\Delta)\backslash\Delta\subset \overline{{\rm Conf}(2, \Sigma)}\to : Nbd_\epsilon (T_\Sigma)\backslash0\cong Sph(T_\Sigma)\times (0,\epsilon)\]
given by
\[(z,w)\mapsto (z, \hat{\xi}(w,z), s(w,z)),\]
where $\hat{\xi}(w,z)$ is the unit tangent vector over $z$ along which there exists a unique geodesic from $z$ to $w$, and $s(w,z)$ is the length of the geodesic.
Given the covariant sphere volume form over $Nbd_\epsilon (T_\Sigma)\backslash0$ and a smooth function $\rho\in C^\infty_{(0,\epsilon)}$ such that $\rho \equiv 1$ over $(0,\epsilon/3)$ and $supp(\rho)\subset (0,2\epsilon/3)$, define 
\[P^{an}(z,w):=f^* (\omega_\theta \cdot \pi^*_\epsilon \rho)(z,w) ,\] 
where $\pi_\epsilon: Sph(T_\Sigma)\times (0,\epsilon)\to (0,\epsilon)$ is the natural projection. Note that $P^{an}$ can be extended to the diagonal strata without any difficulty, which under the local diffeomorphism $f$, corresponds to the thickened sphere bundle $Sph(T_\Sigma)\times [0,\epsilon)$. From here on, we shall mention $P^{an}$ as the extended form.

On $T_\Sigma$, $d\omega_\theta$ is well-defined along the zero-section. Moreover, $\pi^*_\epsilon d\rho$ vanishes near the zero section inside $Nbd_\epsilon (T_\Sigma)\backslash0$. So \[dP^{an} = f^* (d\omega_\theta \cdot \pi^*_\epsilon \rho-\omega_\theta \cdot \pi^*_\epsilon d\rho)\] is a well-defined 2-form on the naive configuration space $\Sigma^{\times2}$. In the following we shall not distinguish explicitly $dP^{an}$ with its descended version.

Finally we need to solve the issue of boundary condition on one of the variables in $P$. To achieve this, we shall use the mirror charge method. The double $D\Sigma\equiv \Sigma\cup_{\partial\Sigma}\Sigma^{op}$ of the manifold $\Sigma$ can be defined whenever the complete geodesic boundaries $\partial\Sigma$ are given, with an involution $\tau$ interchanging points from different sides of the boundary. 
\[\tau: D\Sigma \to D\Sigma: z\mapsto z^{op} ,\]
where $\Sigma^{op}$ is the manifold equipped with opposite orientation, and $z^{op}$ is the the mirror image of $z\in \Sigma$ (resp., $z\in \Sigma^{op}$) in $\Sigma^{op}$ (resp. $\Sigma$). Further more, the embedding of $\Sigma$ entirely determines that of $D\Sigma$. In particular, the connection of $Fr_\Sigma$ determines uniquely a connection on $Fr_{D\Sigma}$.

The reflective symmetry induces one at configuration space level 
\[id\times \tau: \overline{{\rm Conf}(\{z, w\},\Sigma)}\to \overline{{\rm Conf}(\{z,\tau w\},\Sigma,\Sigma^{op})}\subset \overline{{\rm Conf}(2,D\Sigma)}.\]  
In the configuration $\overline{{\rm Conf}(2,D\Sigma)}$, the local structure near the diagonal is similar. Let $f^* (\omega_\theta \cdot \pi^*_\epsilon \rho)(z,w) $ denotes the propagator construction similarly as above over $D\Sigma$, and we shall restrict the form to $\overline{{\rm Conf}(\{z,\tau w\},\Sigma,\Sigma^{op})}$ and pullback via $id\times \tau$. Now the pull-ed back form is denoted by $f^* (\omega_\theta \cdot \pi^*_\epsilon \rho)(z,\tau w) $.
Now we can define the propagator \[P^{an}(z,w)=f^* (\omega_\theta \cdot \pi^*_\epsilon \rho)(z, w)-f^* (\omega_\theta \cdot \pi^*_\epsilon \rho)(z,\tau w)\] with Dirichlet boundary condition on the variable $w$. The second part of the one-form is exact near the diagonal boundary strata, and vanishes when restricted to the fiber boundary strata.

One checks that on $\Sigma^{\times2}$ the form $d(P^{an}) \sim  \Xi$ cohomologically. Firstly, $d(P^{an})$ is an $n$-dimensional locally exact form over $\Sigma\times\Sigma$, so it is a close form on $\Sigma\times\Sigma$. The latter being a manifold with corners, so one could use a version of Lefschtz-Poincare duality on $\Sigma\times\Sigma$ to determine the cohomological class of $d(P^{an})$. The duality pattern is entirely defined by that on $\Sigma$. Consider the cohomological representatives of $H(\Sigma)\otimes H(\Sigma, \partial\Sigma)$, $d(P^{an})$ has to be a linear combination of basis $\{ \pi_1^* \gamma^1_k \otimes \pi_2^* \gamma^1_l, \pi_1^* \alpha^0\otimes \pi_2^* d\beta^1,\pi_1^*\alpha_i^1\otimes \pi_2^* d\beta^0_i\}$. We fix the coefficients by considering the bilinear pairing on $H(\Sigma)\otimes H(\Sigma, \partial\Sigma)$:
\begin{eqnarray*}&&\int_{\overline{{\rm Conf}(2, \Sigma)}} dP^{an}(z,w) \wedge \pi_1^* f \wedge \pi_2^* g=  \int_{\Sigma\times\Sigma}dP^{an}(z,w) \wedge \pi_1^* f \wedge \pi_2^* g
 \\&=& 
\int _{\partial_{\{2\},\emptyset} \overline{{\rm Conf}(2, \Sigma)}} (P^{an}(z,w) \wedge \pi_1^* f \wedge \pi_2^* g)|_\partial+ \int _{\Sigma\times\partial \Sigma} (P^{an}(z,w) \wedge \pi_1^* f \wedge \pi_2^* g)|_\partial\\&&+\int _{\partial\Sigma\times\Sigma} (P^{an}(z,w) \wedge \pi_1^* f \wedge \pi_2^* g)|_\partial\\
&=& \int_\Sigma f\wedge g,
\end{eqnarray*}
where $f\in \Omega^{cl}(\Sigma,\partial\Sigma)$ and $g\in \Omega^{cl}(\Sigma)$.
Similar result holds when we take arbitrary closed form in $\Omega(\Sigma\times\Sigma)$ which satisfies the Dirichlet condition on the first copy of $\Sigma$. This means that $dP^{an}$ is in the same cohomology class as the cohomological dual of the diagonal cycle in $\Sigma\times\Sigma/\Sigma\times\partial\Sigma$.
Since $dP^{an}$ and $\Xi$ are in the same cohomology class of $\Sigma\times\Sigma$, we can simply use an exact form $d\alpha$ to make up for their discrepancy.

Our last possible issue is, whether the newly added correction $\alpha$ respects the same desired boundary condition. Note that in the calculation of $dP$, $d [f^* (\omega_\theta \cdot \pi^*_\epsilon \rho)(z, w)-f^* (\omega_\theta \cdot \pi^*_\epsilon \rho)(z,\tau w)]$ respects the same boundary condition, and hence when choosing the form $\alpha$, it must also respect the same boundary condition.
\end{proof}

With the theorem we have constructed the analytic part $P^{an}$ of the propagator over $\overline{{\rm Conf}(2, \Sigma)}$, which is real valued. Since in our Poisson model (and in fact in all TFTs) the propagator factories as \[P(x,y) = P^{an}(x,y)\otimes P^{alg},\] where $P^{alg}$ is the algebraic part given by the target $l_\infty$ algebra $\mathfrak g$. By the invariant property of the shifted symplectic pairing, $P^{alg}$ satisfies the relation \[(l_1\otimes id + id\otimes l_1)P^{alg} = 0\] automatically. (Another reason is that, $P^{alg}$ induces the identity operator on $\mathfrak h[1]$ via the shifted symplectic pairing, and the identity operator commutes with $l_1$.)
In the Poisson case, choosing a $\Omega_M$-linear basis $\{\mathbb E_X^i\}$ on $\mathfrak h$ (and thereby the basis on the dual space), we have that
\[P^{alg}=\sum_i \mathbb E_X^i[1]\otimes \mathbb E^\eta_i\in \mathfrak h[1]\otimes \mathfrak h^\vee.\] 

\begin{remark}
A propagator $P$ gives rise to a homotopy $s_P$ on the space of field, $\mathcal E^b$ between the identity operator and the operator $\iota\circ \Xi$, which can be further adjusted to satisfy the following conditions \cite{CM}
\begin{itemize}
\item $s_P\circ \iota = 0 = \Xi\circ s_P$,
\item $s_P\circ s_P = 0$.
\end{itemize}
To see this, firstly note that $s_P(\phi)(y):=\int_{b_2^{-1}(y)}  \phi(x)P(x,y)$ whenever $\phi\in \Omega_{\Sigma,D}\otimes \mathfrak h^\vee$ and $s_P(\phi)(y):=\int_{b_1^{-1}(y)} P(y,x) \phi(x)$ whenever $\phi\in \Omega_\Sigma\otimes\mathfrak h $. The map $b_i$ is the forgetting map \[b_i: \overline{{\rm Conf}(\{x_1,x_2\},\Sigma)}\to \overline{{\rm Conf}(\{x_i\},\Sigma)} = \Sigma, \forall i=1,2. \] So the boundary conditions are preserved by $s_P$.
To achieve the above extra properties, we do a series of replacement $s_P\mapsto (id-d\circ s_P-s_P\circ d)\circ s_P\circ (id-d\circ s_P-s_P\circ d)$ to satisfy the first property, as well as $s_P\mapsto s_P\circ d\circ s_P$ to satisfy the second property assuming the first. Obviously those operations do not change the boundary property of $P$, nor the homotopy type.
\end{remark}

Finally, it is important to note that our construction of propagators depends on a partial choice of metric on $\Sigma$ (so as to make the boundary complete geodesic), a connection $\theta$ on $T_\Sigma$, as well as a cutoff function $\rho$. In the following we shall see that the theory exhibits many interesting observables which are independent of the latter two choices.

\subsection{Some vanishing result}

In this section, we shall prove some vanishing results in the given theory. 

We recall the Feynman graph computation firstly. The propagator $P(z,w)$ is a $\mathfrak g\otimes \mathfrak g$-valued smooth one-form over $\overline{{\rm Conf}(2,\Sigma)}$ subject to the chosen boundary condition. By a vertex we mean a $C^*(\mathfrak g)$-valued distribution over $\Sigma$. The contraction between vertices(, labeled by all the vertices $V(\Gamma)$ of a given graph $\Gamma$), and propagators(, labeled by all inner edges $E(\Gamma)$), is defined by the contraction between $\mathfrak g$ and $\mathfrak g^\vee$, and the integration against the configuration space of the vertices. 

Suppose a Feynman graph $\Gamma$ has internal vertices (i.e., vertices coming from terms in the interaction $I$) labelled by $V(\Gamma)$ and external vertices (i.e., whose vertices do not just come from the classical action, see Sec.~\ref{secobserv} for examples) labelled by $W(\Gamma)$, then the integral is the fiber integration along the forgetful map \[ \overline{{\rm Conf}(V(\Gamma)\cup W(\Gamma), \Sigma)}\to  \overline{{\rm Conf}(W(\Gamma), \Sigma)}.\] Suppose that $\Gamma$ has inner edges (i.e., propagators) labeled by $E(\Gamma)$ and external edges decorated by field $\phi$, the analytic part of Feynman integration is written as
\[\int _{\overline{{\rm Conf}(V(\Gamma), \Sigma)}} \wedge _{e\in E(\Gamma)} P^{an}(e_{in}, e_{out}) \wedge_{v\in V(\Gamma)} \phi^{an}(v).\]The computation for the algebraic sector is given by the natural pairing between $\mathfrak g$ and $\mathfrak g^\vee$.

The configuration space approach in defining the propagator fully determines the evaluation of all Feynman graphs, except for those with tadpoles. By a tadpole we mean a graph with an edge that starts and ends with the same internal vertex, and hence receives the contribution from $P^{an}(z,z)$. In quantization approaches where the diagonal part $P^{an}(z,z)$ is well-defined, the tadpole graphs could lead to an anomaly in BV quantization, for example see \cite{LL16}. In our approach, the analytic propagator as a one-form on the two-point configuration space, which does not descend along the pushdown map, so $P^{an}(z,z)$ is undefined. 

There are essentially two ways to solve the issue of tadpole. If the Euler class in $H^2(\Sigma, \partial\Sigma)$ is trivializable by a one-form in $\Omega^1(\Sigma, \partial\Sigma)$, then it is possible to define $P^{an}(z,z)$ by this one-form without spoiling anything. At differential level, the triviality of Euler class $\chi_{\Sigma,\partial\Sigma}$ provides a no-where vanishing vector field on $\Sigma^\circ$, allowing the point-separation technique as in \cite{CF00} to apply. 

If $\chi_{\Sigma,\partial\Sigma}\neq0$, one can not find a nonsingular expression for $P^{an}(z,z)$. Then one is forced to discuss only graphs without tadpoles \cite{CMR}, or equivalently, setting all tadpoles to zero. This seemingly naive operation has an explanation at the level of graph complex in \cite{CW}. Here we shall show that ignoring tadpoles can be done in a consistent way. By consistent, we mean all the allowed operations, such as multiplication, changing the integration order, acting by $Q$, restricting graphs to the boundary strata of configuration spaces, would not result in tadpoles. 

\begin{proposition}
The Feynman graphs involving tadpoles can be set to vanish consistently.
\label{tad}
\end{proposition}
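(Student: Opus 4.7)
The plan is to recast the claim as a closure statement about an ideal in the graph algebra. Let $\mathcal{T}$ denote the linear span of Feynman graphs containing at least one self-loop (tadpole), and let $\mathcal{F}$ denote the full space of Feynman graph sums. I want to show that $\mathcal{T}$ is preserved by every operation that appears in the BV calculus: the commutative product used to form $e^{I/\hslash + tO}$, the propagator-contraction operator $\hslash\partial_P$ whose iteration generates the Feynman expansion, the BV Laplacian $\Delta$, the de Rham differential $d_\Sigma$, and the restriction of Feynman integrands to the boundary strata $\partial_{S,M}\overline{\mathrm{Conf}(n,\Sigma)}$ appearing when Stokes' theorem is applied in the QME analysis of Section~\ref{sec:qme}. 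Closedness of $\mathcal{T}$ under each of these operations implies that the prescription ``declare every tadpole graph to equal zero'' descends to a well-defined rule on $\mathcal{F}/\mathcal{T}$, which is precisely what consistency means.

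The algebraic checks are formal. A product of graphs is a disjoint union and cannot create a self-loop that was not already present, so $\mathcal{T}$ is a two-sided ideal under the product. The operator $\partial_P$ glues a pair of external half-edges attached to two vertices via a propagator; the diagonal case of two half-edges at the same vertex is exactly the tadpole, and setting this term to zero is by definition the prescription, giving a well-defined operator on $\mathcal{F}/\mathcal{T}$. The BV Laplacian $\Delta$ is built from the smooth diagonal class $\Xi$ rather than the singular propagator $P^{an}$, so it does not require $P^{an}(z,z)$ at all and is insensitive to the prescription. Finally, $d_\Sigma$ acts separately on vertex insertions and on propagators, preserving the combinatorial skeleton of each graph.

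The main subtlety, and the place where one must do real work, lies in the boundary-strata analysis. When a cluster $S$ of vertices collapses along $\partial_{S,M}\overline{\mathrm{Conf}(n,\Sigma)}$, the integrand factorises, in the Kontsevich--Axelrod--Singer fashion, into a fiber integral (over the sphere-bundle fiber attached to the collapsed point, depending only on the edges internal to $S$) times an integral on the reduced configuration space for a quotient graph $\Gamma/S$ in which $S$ is contracted to a single new vertex. A potential tadpole on the collapsed vertex could only come from an edge of $\Gamma$ with both endpoints in $S$; but such edges are precisely those absorbed into the fiber integral, and therefore do not appear as edges in $\Gamma/S$. Inter-cluster edges survive into $\Gamma/S$ with exactly one endpoint outside $S$, hence never become self-loops. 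The mixed strata $\partial_{S,M}$ with $M\neq\emptyset$ are handled identically, using the mirror-charge symmetry of $P^{an}$ that was built into the construction in Theorem~\ref{propthm}.

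The main obstacle I expect is making the combinatorial collapse rule precise enough to match the analytic factorisation of the Feynman integrand at each boundary stratum, and then verifying that every Stokes identity used to prove the QME expresses itself as a sum of tadpole-free graphs modulo $\mathcal{T}$. Once this is pinned down, the graph calculus on the quotient $\mathcal{F}/\mathcal{T}$ is internally consistent: all of $\cdot$, $\partial_P$, $\Delta$, $d_\Sigma$, and boundary-strata restriction descend, and the tadpole-erasure prescription is globally compatible with every step of the BV quantization.
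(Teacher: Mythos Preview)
Your approach is correct but differs in strategy from the paper's. The paper argues more concretely and theory-specifically: it first uses dimension counting together with the vanishing of parallel edges (Proposition~\ref{par}, proved immediately afterwards) to bound the number of tadpoles in any connected graph to at most one, then rewrites the Feynman expansion via the identity
\[
\log\bigl[e^{\hslash\partial_P}e^{I/\hslash}\bigr]=\log\bigl[e^{\hslash\partial_P'}e^{(I+\hslash\partial_P I)/\hslash}\bigr],
\]
where $\partial_P'$ is the propagator contraction that omits self-contractions, so that all tadpoles are isolated in the modified vertex $\hslash\partial_P I$; finally it checks that the only place a tadpole could re-enter is through $Q$ acting, and observes that $[Q,e^{\hslash\partial_P'}]$ produces no tadpole. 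Your ideal-closure argument is more categorical: it does not require the bound on the number of tadpoles, does not invoke Proposition~\ref{par}, and treats the boundary-strata restriction carefully (a point the paper leaves implicit, since its later QME analysis handles strata directly). The trade-off is that the paper's version makes explicit exactly which modified interaction replaces $I$ and why the discrepancy is harmless, whereas yours gives a cleaner structural reason that would apply to any topological BV theory on $\Sigma$, at the cost of being less explicit about the shape of the tadpole contribution.
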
 

\begin{proof}

Tadpole can only happens very locally at each vertices, the singularity is along the diagonal ideal, so tadpole at different vertices can be multiplied without causing further trouble. 

For dimension reason, the theory can only have one or two tadpoles at most. Due to the lack of parallel edges in our theory (stated and proved below in Prop.~\ref{par}), the number of allowed tadpoles is reduced to one. So we shall denote such tadpole vertices by $(\hslash\partial_P I)$.

Now all the connected Feynman graphs can be enumerated\footnote{We will show that the formal expression involves only finite graphs at each finite order of $\hslash$ in the next section, Thm.~\ref{finitesumthm}.} by 
\[{\rm log}[{\rm exp}(\hslash \partial_P) {\rm exp}(\frac {I}\hslash)]={\rm log}[{\rm exp}(\hslash \partial_P') {\rm exp}(\frac {I + \hslash(\partial_P I)}\hslash)],\]
where $\partial_P'$ is the propagator contraction operator which vanishes when acting on a single vertex. The subset of graphs ${\rm log}[{\rm exp}(\hslash \partial_P') {\rm exp}(\frac {I}\hslash)]$ is well-defined.

The only potential danger is, when we have $Q$ acting on graphs. We have that $QP^{an}(z,z) = \Xi(z,z)$, which is well-defined and provides a cohomological representative for Euler class in the boundary case by the Lefschtz-Poincare duality. But it is easy to see that 
\[{\rm log}\Big([Q,{\rm exp}(\hslash \partial_P') ]{\rm exp}(\frac {I}\hslash)\Big)\] contains no tadpole. Thereby all potential problems is when $Q$ acts on $(\hslash\partial_P I)$, which is not of our concern.
\end{proof}

Due to this reason, in the rest of the paper we shall not discuss tadpole graphs. The operator $\partial_P$, and $[Q, \partial_P]$ shall be understood as $\partial_P'$, and $[Q, \partial_P']$ without further confusing.

By parallel edges we mean a graph that contains a bi-gon, and the orientations of both propagators are parallel.

\begin{proposition}
The graphs involving parallel edges vanish.
\label{par}
\end{proposition}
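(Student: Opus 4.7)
The plan is purely analytic and rests on the observation from Theorem~\ref{propthm} that $P^{an}\in \Omega^1(\overline{\mathrm{Conf}(2,\Sigma)})$ is a genuine $1$-form on the two-point compactified configuration space. Since the Feynman integrand for any graph $\Gamma$ factorises into an analytic wedge of propagator forms on $\overline{\mathrm{Conf}(V(\Gamma),\Sigma)}$ tensored with an algebraic contraction against $P^{alg}$ at the vertices, it is enough to kill the analytic factor in order to make the whole integral vanish.

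Concretely, for each edge $e\in E(\Gamma)$ with ordered endpoints $(v_e^-, v_e^+)$, let
\[
\pi_e \colon \overline{\mathrm{Conf}(V(\Gamma),\Sigma)} \longrightarrow \overline{\mathrm{Conf}(\{v_e^-, v_e^+\},\Sigma)} \cong \overline{\mathrm{Conf}(2,\Sigma)}
\]
denote the orientation-preserving forgetful map, so that the analytic part of the Feynman integrand is the wedge $\Omega_\Gamma := \bigwedge_{e\in E(\Gamma)} \pi_e^* P^{an}$. If $e_1$ and $e_2$ are parallel in the sense of the statement, then the source and the target of $e_1$ coincide with those of $e_2$, including the orientation, so $\pi_{e_1} = \pi_{e_2}$ as maps and $\omega := \pi_{e_1}^* P^{an} = \pi_{e_2}^* P^{an}$ is literally the same $1$-form. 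Because $\omega$ has odd form degree, $\omega \wedge \omega = 0$, and this zero factor appears inside $\Omega_\Gamma$, forcing $\Omega_\Gamma \equiv 0$ pointwise on $\overline{\mathrm{Conf}(V(\Gamma),\Sigma)}$. The algebraic sector therefore never enters the argument.

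There is no substantive obstacle, but two sanity checks are worth recording, and together they form the only thing one actually needs to verify. First, the propagator produced in the proof of Theorem~\ref{propthm} is assembled from the fibrewise horizontal form $\omega_\theta$, its mirror-charge subtraction, and a possible exact $1$-form correction $\alpha$ used to pin down the cohomology class of $dP^{an}$; each piece has pure form degree $1$, so the sum is again a $1$-form and the argument above is not spoiled by these corrections. Second, the word ``parallel'' is essential: had one orientation been reversed, the two pullbacks would differ by the factor-swap involution on $\overline{\mathrm{Conf}(2,\Sigma)}$ and would in general be linearly independent $1$-forms, so the wedge need not vanish. Granting these points, the vanishing is immediate, and the proposition can then be used silently to eliminate bi-gon contributions throughout the Feynman-graph bookkeeping of Section~\ref{sec:qme}, including the appeal already made in the proof of Proposition~\ref{tad} to reduce the number of admissible tadpoles to one.
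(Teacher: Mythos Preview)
Your proof is correct and takes a genuinely different route from the paper's. You argue analytically: since $P^{an}\in\Omega^1(\overline{\mathrm{Conf}(2,\Sigma)})$, two parallel edges with the same ordered endpoints pull it back via the same forgetful map, and the wedge square of a $1$-form vanishes identically. The paper instead argues on the algebraic side, opening with ``This is a pure algebraic result'': it inspects the target contraction $\langle\Pi_m(\mathbb E_\eta,\dots),\mathbb E_\eta\rangle_1\,\langle l_n(\mathbb E_X,\dots),\mathbb E_X\rangle_1$ and uses that in $\mathcal T_M\oplus\mathcal T_M^\vee[-1]$ the $\mathbb E_X$ basis sits in even degree while $\mathbb E_\eta$ sits in odd degree, so contracting twice with $P^{alg}=\sum_i\mathbb E_X^i\otimes\mathbb E_i^\eta$ forces a sign cancellation. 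Your approach is more elementary and needs nothing about the target $l_\infty$ structure; it exploits only that the worldsheet is two-dimensional, so the analytic propagator has odd form degree, and would not transfer to higher-dimensional AKSZ theories where $P^{an}$ has degree $d-1$. The paper's approach is tied instead to the cotangent-type splitting $\mathfrak h\oplus\mathfrak h^\vee[-1]$ of the target and would survive a change of worldsheet dimension.
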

\begin{proof}
This is a pure algebraic result.
Consider the algebraic part of the graph
\[\langle\Pi_m(\mathbb E_{\eta}, \cdots), \mathbb E_{\eta}\rangle_1 \langle l_n(\mathbb E_X, \cdots),\mathbb E_X\rangle_1 
\]
where the pairing is the $1$-shifted symplectic pairing on the target $l_\infty$ algebra $\mathfrak h \oplus \mathfrak h^\vee[-1]$. The pairing is $\Omega_M$-linear, and hence the symmetric property depends only on the degree of the $\mathcal T_M[1]\oplus \mathcal T^\vee _M$-component. I.e., given $\alpha\otimes X$ and $\beta\otimes Y$ in $\Omega_M\otimes(\mathcal T_M\oplus \mathcal T^\vee _M[-1])$, \[l_n(\alpha\otimes X, \beta\otimes Y) = (-)^{|\beta|\cdot|X|}\alpha\wedge\beta\otimes l_n(X, Y, \cdots) 
.\] Since the whole expression is $\Omega_M$-linear, one only need to consider the evaluation on the basis $\{\mathbb E_X^i, \mathbb E^\eta_i\}_{i\in I}$ inside $\Omega_M^0\otimes(\mathcal T_M\oplus \mathcal T^\vee _M[-1])$-component. The $\mathcal T_M$-component has even degree, while the $\mathcal T^\vee _M[-1]$ component has odd degree. Since the algebraic propagator $\mathbb E_{\eta}\otimes\mathbb E_X$ provides a contraction, this would vanish the above expression.

\end{proof}

\begin{proposition}
The graphs involving the $l_0$-vertex vanish. 
\label{prop:l0}
\end{proposition}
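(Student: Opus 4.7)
The plan is to observe that the $l_0$-vertex is valence one and carries a vertex rule that is constant along $\Sigma$, and then to derive the vanishing from a bidegree count in the position of that vertex.

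First I would identify the structure of the vertex. Reading off the action, the $l_0$-term $\int_\Sigma \langle \eta, l_0\rangle_1$ with $l_0\in\Omega_M\otimes\mathcal{T}_M$ independent of the $\Sigma$-coordinate produces a Feynman vertex with exactly one $\eta$-leg and no $X$-legs. Because the algebraic part $P^{alg}=\sum_i \mathbb{E}_X^i[1]\otimes\mathbb{E}^\eta_i$ of the propagator pairs an $X$-slot with an $\eta$-slot, in any Feynman graph $\Gamma$ containing at least one propagator the unique leg at the $l_0$-vertex $v_0$ is contracted by exactly one propagator, and this single propagator is the only piece of the integrand that depends on the position $w_0$ of $v_0$.

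Next I would perform the bidegree count in $w_0$. The analytic propagator $P^{an}\in\Omega^1(\overline{\mathrm{Conf}(2,\Sigma)})$, pulled back to $\Sigma\times\Sigma$, is a total $1$-form, so its component in the $w_0$-direction has degree at most one. Since the vertex rule at $v_0$ contributes no $\Sigma$-form content (the pairing $\langle\mathbb{E}^\eta_i,l_0\rangle_1$ is a scalar in the $\Sigma$-variables), the full Feynman integrand has degree at most one in $w_0$. But the fiber integration over $w_0\in\Sigma$ is two-dimensional and so requires degree exactly two in $w_0$. The integral therefore vanishes by this dimension count.

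The main subtlety will be to confirm that the same argument survives on the boundary strata of the Fulton-MacPherson compactification $\overline{\mathrm{Conf}(V(\Gamma),\Sigma)}$. On collision strata, the local model near the blown-up diagonal is a sphere bundle and the restricted propagator is still a $1$-form transverse to the blow-down, so the bidegree count still forbids any contribution there. On strata where $w_0\in\partial\Sigma$, property $(3)$ of Theorem~\ref{propthm} (the Dirichlet boundary condition imposed in Sec.~\ref{classicalbvthy}) forces the $\eta$-end of the propagator adjacent to $v_0$ to vanish, eliminating those corner contributions as well. I expect this boundary-stratum check to be the only delicate step; the bulk dimension count is immediate and covers the generic situation.
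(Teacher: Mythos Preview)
Your proposal is correct and follows essentially the same form-degree argument as the paper: since $l_0$ is a $0$-form on $\Sigma$ and the single $\eta$-leg at the $l_0$-vertex is necessarily filled by a propagator, which is only a $1$-form, the integrand never reaches the required degree $2$ in the position of that vertex. Your boundary-stratum discussion is extra caution beyond what the paper gives; the paper simply notes that the propagator is a $1$-form \emph{by construction on the compactified configuration space}, which already makes the degree count uniform across all strata, so no separate corner analysis is needed.
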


\begin{proof}
The element $l_0$ is a degree $2$ element in $\Omega^0_\Sigma\otimes \mathfrak h$. The fact that $l_0$ is a constant 0-form on $\Sigma$ is due to the fact that the de Rham differential on $\Sigma$ squares to zero, and $l_0$ is obtained by the tensor product of $1\in \Omega^0_\Sigma$ and the curving in $\mathfrak h$. So in the vertex $\int_\Sigma\langle l_0, \eta\rangle_1$, the field $\eta$ contributes its 2-form component. But if so, any propagator connecting that $\eta$ to other fields must vanish, since the propagator is a 1-form by construction.
\end{proof}

\subsection{Quantum master equation}
\label{sec:qme}

In this section we shall look into analysis of non-vacuum Feynman graphs and show that the quantum master equation holds for effective action. As a topological theory, we have that
$$
I_{\rm eff} [\phi]= {\rm log}(e^{\hslash \partial_P} e^{\frac{I}{\hslash}})=\sum_\Gamma \frac1{|{\rm Aut}\,(\Gamma)|} \int_{\overline{{\rm Conf}(\Gamma,\Sigma)}}  \prod_{e\in {\rm E}(\Gamma)} \hslash \partial_{P^e}   \prod_{v\in V(\Gamma)} \frac{I_v}\hslash ,
$$
where the summation is over all connected Feynman graphs with only bulk vertices. Indeed, if one of the vertices hits the boundary, the integration over the configuration vanishes due to the boundary condition one chooses for the propagator and for the background field $\phi$. For each graph $\Gamma$, we use ${\rm E}(\Gamma)$ to denote the set of its internal edges (propagators), and likewise $V(\Gamma)$ for the set of vertices.

While the previous section about the vacuum graphs has the result largely depend on the general properties of TFT, the non-vacuum graphs depends more on the Lagrangian description of each independent theory. In our theory, the propagator corresponds to a two-point correlation 1-form on the configuration space of two points in $\Sigma$, which is graphically described by an inner edge with direction, which points from the $\eta$ field side to the $X$ field side.  Vertices can have arbitrary valency greater or equal\footnote{There is an $l_0$ operation in the target $l_\infty$ algebra $\mathfrak g$, so the classical action exhibits a valency one vertex. However, as we saw previously, such vertex amounts to vanishing graphs.} to $2$. 
From each graph we shall obtain a (not necessarily local) homogeneous functional over the external fields, whose homogeneity is given by the number of half edges in the corresponding graph. 
\begin{figure}
\begin{center}
\includegraphics[width=2in]{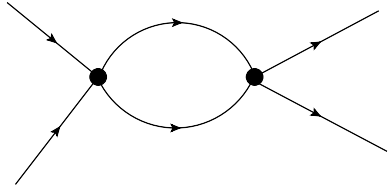}
\end{center}
\caption{\footnotesize An example of directed graph with two bulk vertices.}
\label{bulk1}
\end{figure}
\begin{example}
In Fig.~\ref{bulk1} we give an example of graphs that are under consideration, which involves only bulk vertices. This graph gives rise to a functional on the background field, whose homogeneous component is given by the map
\begin{eqnarray*} \mathcal E^{\otimes 4}\to \mathbb R&:& (\phi_1, \phi_2, \phi_3, \phi_4)\mapsto \\
&& ({\rm symmetry\,factor})\cdot \int_{\overline{{\rm Conf}(\{z,w\}, \Sigma)}} \partial^2_{P^{an}(z,w)} \partial_{\phi_1(z)}\partial_{\phi_2(z)} \partial_{\phi_3(w)}\partial_{\phi_4(w)}\cdot\\
&&\cdot\langle \phi(z), \frac1{4!} l'_3(\phi^{\otimes 3}(z))\rangle_1\langle \phi(w), \frac1{4!} l'_3(\phi^{\otimes 3}(w))\rangle_1,  \end{eqnarray*}
where $\phi\in \mathcal E$. 
The expression factorises into the analytic part on the configuration space, and the algebraic part which encodes the $l_\infty$ algebra. 
\[ \int _{\overline{{\rm Conf}(2, \Sigma)}} \omega_\Gamma  =\int _{\overline{{\rm Conf}(2, \Sigma)}}  \Phi\left(X_1(z), X_2(z) ,\eta_3(w) ,\eta_4(w)\right) P^{an}(z,w) P^{an}(z,w).\]
Interpreted as a quartic functional over $\mathcal E$, the above integration needs to make sense for any external field, i.e., for any choice of $X_1$, $X_2$, $\eta_3$ and $\eta_4$.
\label{ex1}
\end{example}

To make the sum mathematically sensible, it is important to show that there are only finitely many summands for each term in $I_{\rm eff} [\phi]$. The terms in $I_{\rm eff} [\phi]$ is labeled by the power of $\hslash$ and by the homogeneous degree viewed as a functional over $\Omega_\Sigma\otimes \mathfrak g[1]$. In terms of the graphs, one needs to show that for a given number of external legs, there are only finitely many Feynman graphs. This is trivially true if the classical theory contains no bivalent vertices. However, both the $l_1$ operation on $\mathfrak h$ and the bivector field $\Pi$ leads to such terms. So we shall give a proof for the statement.

\begin{theorem}
For each homogeneous term in $I_{\rm eff} [\phi]$ up to a given number of loops (, or genes, or powers in $\hslash$), there are only finitely many connected Feynman graphs in the summation.
\label{finitesumthm}
\end{theorem}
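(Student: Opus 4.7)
The approach I would take is dimensional: bound the number of internal vertices in any contributing Feynman graph by a form-degree count that exploits both the two-dimensionality of $\Sigma$ and the fact that the analytic propagator $P^{an}$ is a one-form on $\overline{{\rm Conf}(2,\Sigma)}$. Fix the number $m$ of external legs (the homogeneous degree of the functional in $\phi$) and the loop order $L$. For a connected graph $\Gamma$ with $V = |V(\Gamma)|$ internal vertices and $E = |E(\Gamma)|$ internal edges the Euler identity gives $L = E - V + 1$.

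The Feynman weight $\omega_\Gamma$ is, by construction, a differential form on $\overline{{\rm Conf}(V(\Gamma),\Sigma)}$ obtained as a wedge product of three sorts of factors: the $E$ pull-backs of the one-form $P^{an}$ along the edge-projections; the $m$ external fields $\phi$ evaluated at the positions carrying external legs, each a form of degree at most $2$ on $\Sigma$; and the algebraic tensors from the target $l_\infty$ algebra $\mathfrak g = \Omega_M\otimes(\mathcal T_M[-1]\oplus\mathcal T_M^\vee)$, which, being built out of forms on the target $M$, contribute no $\Sigma$-form degree and factor cleanly through the wedge product thanks to $P(x,y) = P^{an}(x,y)\otimes P^{alg}$. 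Since the integration domain has real dimension $2V$, non-vanishing forces the top-degree matching
$$E + \sum_{\rm ext.\ legs}\deg_\Sigma \phi = 2V.$$
Using $\deg_\Sigma\phi\leq 2$ for each external field and substituting $E = V + L - 1$ yields
$$V \;\leq\; 2m + L - 1.$$
Hence $V$, and thereby $E$, is bounded in terms of the data $(m,L)$.

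With Prop.~\ref{prop:l0} we may discard univalent vertices coming from $l_0$, so every remaining vertex has valence in $[2,\,2E+m]$, which in turn bounds the arity $n$ of any operation $l_n$ or $\Pi_n$ that can label a vertex of $\Gamma$. The remaining counting — isomorphism classes of connected graphs with bounded vertex number, edge number, and finitely many vertex labels — is a routine combinatorial enumeration and produces finitely many terms. The step I expect to require most care is the verification that $\omega_\Gamma$ does genuinely split as a wedge of a pure $\Sigma$-form factor with a purely algebraic scalar, so that the dimensional count applies unmodified; this in turn reduces to checking that the $l_\infty$-operations and the pairing $\langle-,-\rangle_1$ act only on the algebraic tensor factor of the fields, introducing no extra $\Sigma$-form degree at any vertex. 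This is manifest from the constructions in Sec.~\ref{linfty}, but should be recorded explicitly before invoking the degree count.
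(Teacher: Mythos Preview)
Your argument is correct and takes a genuinely different route from the paper's. The paper first runs the standard Euler-characteristic count, which only bounds the number of at-least-trivalent vertices by $2L+m-2$; it then needs a second, algebraic step to control the bivalent $l_1$- and $\Pi_1$-vertices, arguing that along any linear chain the $l_1$-vertices raise the $\Omega_M$-degree (so at most $\dim M$ of them fit) and that two $\Pi_1$-vertices cannot be joined without an $X$-quadratic bivalent vertex, which the theory lacks. Your approach bypasses this entirely: the relation $E+\sum_{\rm ext}\deg_\Sigma\phi=2V$, together with $E=V+L-1$ and $\deg_\Sigma\phi\le 2$, bounds the total vertex number directly, bivalent vertices included. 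In effect you are exploiting the $\Omega_\Sigma$-grading where the paper exploits the $\Omega_M$-grading. Your argument is the more economical one and applies uniformly to any two-dimensional AKSZ-type theory whose analytic propagator is a one-form on $\overline{{\rm Conf}(2,\Sigma)}$; the paper's argument, by contrast, is tied to the specific target data (the nilpotence of $\Omega_M^{\ge 1}$ and the shape of the $\Pi$-vertices), which could be useful in variants of the construction where the worldsheet degree count is less clean. Your caveat about verifying that the $l_\infty$-operations introduce no $\Omega_\Sigma$-degree is well placed and easily discharged: all vertices in $I$ come from the target operations $l_n$ and $\Pi_n$ extended $\Omega_\Sigma$-linearly (the sole $d_\Sigma$ sits in the free part, not in $I$), so the wedge factorisation holds on the nose.
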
 

\begin{proof}
Consider one graph $\Gamma$ in Poisson theory with $n$ external legs and $l$ loops. Suppose the number of bivalent vertices in $\gamma$ is given by $v_2$
and the number of at least trivalent vertices is given by $v$. Let $e$ denotes the number of propagators. So we have that 
\[ v+v_2-e+l = 1.\] 
On the other hand we can count the number of inner edges from the vertices side:
\[n+2e\geq 3v+2v_2.\]
Combing the two relations, one obtains that 
\[v\leq 2l+n-2,\]
so there are at most $2l+n-2$ as many at least trivalent vertices. Together with the bound for external legs, there are only finitely many choices of at least trivalent vertices that show up in those graphs. It only remains to show that $v_2$ is bounded. If so, there are only finite many choices for bivalent vertices that can show up in those graphs. So the total number of graphs is finite.

To show the bound of $v_2$, consider any connected graph $\Gamma$ and delete all the at least trivalent vertices. From the topology of the graph, we must be left with finite number of linear graphs which involves only bivalent vertices. The proposition can be shown by noting that there can not exist a linear graph with infinitely many bivalent vertices. (The smallest linear graph is the propagator without vertices.) Within a linear graph, the number of $\int_\Sigma\langle \eta, l_1(X)\rangle_1$ vertex can not exceed ${\rm dim} M$, since $l_1(X)$ is in the nilpotent ideal $\Omega^{\geq1}_M\otimes\mathfrak h$. The number of $\int_\Sigma\langle \eta, \Pi_1(\eta)\rangle_1$ vertex can not be larger than $1$, otherwise those two vertices need to be connected by a bivalent vertex quadratic in $X$ somewhere, which is not possible. This completes the proof.
\end{proof}

To make sense of the Feynman graph calculation, it still remains to show that the integration associated to each valid graphs is finite. 
But since the FMP configuration space is compact, that is automatic. 

The main result in this section is Thm.~\ref{thmqme}, stating the QME and a parameterised version, the proof of which we shall now present.

\begin{proof}

Consider $QI_{\rm eff}[\phi]:=d_\Sigma I_{\rm eff}[\phi] = I_{\rm eff}[\phi; d\phi]$ \footnote{Note that here the $l_1$ operation was taken into account in the interaction $I$ of the action, and correspondingly the operator $Q$ showing up in classical Master equation and quantum Master equation contains only the de Rham operator over $\Sigma$, as oppose to $QI_{\rm eff}[\phi]:=d I_{\rm eff}[\phi] + l_1 I_{\rm eff}[\phi]$ in \cite{Co11, GG_ahat}. Luckily here, with the insertion of the bivalent vertex would not result in the issue of infinite graphs, as we saw in Thm.~\ref{finitesumthm}.}.  If $I_{\rm eff}^{(k)}$ is a functional of homogeneous degree $k$, then $I_{\rm eff}^{(k)}[\phi; d\phi]$ is a short notation of $k$-multilinear functional
\[\sum_{i=1}^k I_{\rm eff}[\phi_1, \cdots, d\phi_i, \cdots, \phi_k], \forall \phi_1, \cdots\phi_k\in \mathcal E_\Sigma.\]

Let $\phi\in \mathcal E_\Sigma$, by Stokes theorem, we have
\begin{eqnarray*}QI_{\rm eff}[\phi] &=&\sum_\Gamma \int_{\partial \overline{{\rm Conf}(\Gamma,\Sigma)}} ({\rm boundary\,contribution}) \,\\
&&-\int_{\overline{{\rm Conf}(\Gamma,\Sigma)}} ({\rm one \,propagator \,replaced\, by \,}\Xi) ,
\end{eqnarray*}
which, in graph expansion, is given by
\begin{eqnarray*} QI_{\rm eff}[\phi] &=& - \sum_{\Gamma} \frac1{|{\rm Aut}\,(\Gamma)|} \sum_{g\in {\rm E}(\Gamma)} \int_{\overline{{\rm Conf}(\Gamma,\Sigma)}} \hslash \partial_{\Xi}^g \cdot\prod_{e\in {\rm E}(\Gamma)\backslash \{g\}} (\hslash \partial^e_{P}) \cdot  \prod_{v\in V(\Gamma)} \frac{I_v[\phi]}\hslash \\
 &&+\sum_{\Gamma} \frac1{|{\rm Aut}\,(\Gamma)|} \int_{\partial_\delta \overline{{\rm Conf}(\Gamma,\Sigma)}} \prod_{e\in {\rm E}(\Gamma)} \hslash \partial^e_{P} \cdot \prod_{v\in V(\Gamma)} \frac{I_v[\phi]}\hslash
\end{eqnarray*}

The first line decouples into two classes according to the topology of subgraph of $\Gamma$ after deleting one inner edge $g$. I.e., the edge $g$ either attaches to a single {\it connected} component of $\Gamma$, or connects two {\it disjoint} subgraphs in $\Gamma$, as shown in Fig.~\ref{1PI}. As is clear, the first class corresponds to the term $\hslash \Delta I_{\rm eff}[\phi]$, while the second class corresponds to $\{I_{\rm eff}[\phi], I_{\rm eff}[\phi]\}$. \begin{figure}
\begin{center}
\includegraphics[width=4in]{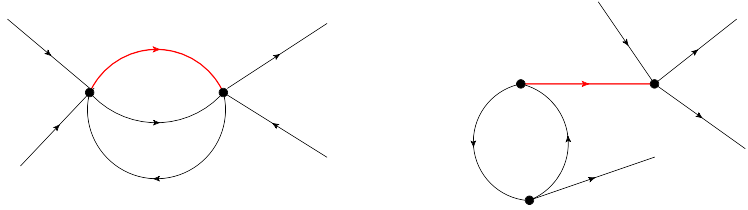}
\end{center}
\caption{\footnotesize An example of Feynman graph where the edge $g$, shown in red colour, attaches to a single connected component as shown in left, and a Feynman graph where the edge $g$ connects two disjoint subgraphs as shown in the right.}
\label{1PI}
\end{figure}

The second line involves integrations over the boundary strata of the configuration space, which could come from three cases: 
\begin{enumerate}
\item one or more vertices in $\Gamma$ hits the boundary $\partial\Sigma$,  
\item two vertices collapse into a single point in the bulk $\Sigma^\circ$,
\item three or more vertices collapse into a single point in the bulk $\Sigma^\circ$.
\end{enumerate}

For the first case, due to the boundary condition on propagators and on external fields, there can not be nontrivial contribution. 

For the third case, as we shall see, vanishing results following Kontsevich's arguments apply. 
Since the integration is supported on some diagonal, it is of form $Sph(T^{\Gamma'}_\Sigma)\times_\Sigma {\rm Conf}(\Gamma/\Gamma', \Sigma)$, where $\Gamma'$ is the collapsing subgraph of $\Gamma$, and $\Gamma/\Gamma'$ is the reminiscent of $\Gamma$ after the collision. The fiberproduct is defined via the base map of the sphere bundle, and the forgetful map of those non-collapsing vertices. Overall the integration is nontrivial only when there are $2n-3$ propagators where $n$ is the number of collapsing vertices. The external lines are all supported on ${\rm Conf}(\Gamma/\Gamma', \Sigma)$, and hence do not contribute to the fiber integration. Now shall focus on the collapsing sub-graph $\Gamma'$, which has $n$ vertices and $2n-3$ inner edges. So there must exists at least one vertex in $\Gamma'$ with less than one attaching edges. 

Those $2n-3$ inner edges correspond to the restriction of propagators onto the boundary strata, and for each propagator, only the covariant sphere volume form part survives, i.e., $P_{x,y}|_{x\to y} = \omega_{\theta}|_{x,y}$. The fiber integration corresponds to an integration along the $2n-3$-dimensional sphere in
\[\underbrace{OFr \times_\Sigma OFr\times_\Sigma \cdots OFr}_{(n-1)\text{-fold fiber-product}}\times S^{2n-3},\]
 (which shall be denoted by $OFr^{n-1}\times S^{2n-3}$) by consider the following commutative graph
\[\xymatrix{
OFr^{n-1}\times S^{2n-3} \ar[r]^{\bar p} \ar[d]^{\bar \pi} & OFr^{n-1}\times_{SO(n)^{n-1}} S^{2n-3} \ar[d]^\pi\\
OFr^{n-1} \ar[r]^p & \Sigma
}.\] 

When $n>2$, the integrand $ \omega_{\Gamma'}$ over $OFr^{n-1}\times_{SO(n)^{n-1}} S^{2n-3}$ satisfies that $p^*\pi_*\omega_{\Gamma'} = \bar\pi_*\bar p^*\omega_{\Gamma'} $ due to the commutativity. The expression $\bar p^*\omega_{\Gamma'} $ is a polynomial over $\theta$ valued in $\Omega^*_{S^{2n-3}}$. Integration along $\bar\pi$ leads to a basic form depends polynomially over $\theta$. So $\pi_*\omega_{\Gamma'}$ is a characteristic form. But there is no such form so $\pi_*\omega_{\Gamma'}$ is independent of $\theta$. So we only need to consider the boundary integration of differential forms depend polynomially over the pullbacks of standard $S^1$-volume form $\omega$, and this is exactly the setting where Kontsevich's vanishing result (c.f., 6.6 in \cite{Kont97}) applies. 

For the second case where $n=2$, a similar analysis shows that one only need to consider the case where the integrand (i.e., the shrinking edge) is independent of $\theta$. Then the fiber integration $\int_{S^1} \omega$ leads to a unit $1$. Effectively, the shrinking edge induces the merging of its end points, which can be classified into three sub-cases. 
\begin{enumerate}[i)]
\item The shrinking edge connects one $l_n$-vertex to another $l_m$-vertex.
\item The shrinking edge connects one $l_n$-vertex to an $\Pi$-vertex.
\item The shrinking edge connects an $\Pi$-vertex to another $\Pi$-vertex.
\end{enumerate}
In each case, the total results vanish due to the curved $l_\infty$ identity, the graded (anti)commutativity of $l_n$ and Poisson bi-vector field $\Pi$, and the Jacobian of the latter respectively\footnote{Recall that for the identity of curved $l_\infty$ algebra to hold, we need to include the $l_1$-vertices in the interaction term $I$.}.
This completes the proof for the statement that $S_{\rm eff}$ satisfies QME.

Next we are going to show that the dependence on the gauge choices we made in defining the propagator do not change the result. The gauge choices of propagator involve the following aspects: 
\begin{enumerate}
\item a choice for the representative for the cohomology of $\Sigma$ via the diagonal form $\Xi$, 
\item a choice of connection on $T_\Sigma$, 
\item a choice for the cutoff function $\rho$, 
\item and a choice of metrics of $\Sigma$ near the boundaries, such that the boundaries are complete geodesic sub-manifolds. 
\end{enumerate}
Consider any one-parameter family of the above choices, i.e., a smooth map from an interval $\mathbb I$ to the infinite dimensional moduli space $\mathcal M$ of gauge-choices. We can always realise such a family into a family version of propagator defined on $\Sigma\times \mathbb{I}$. 

Firstly, the choice of $\Pi_h$ can be encoded in the parameterised construction in the following way. When we extend the representatives of $H^*(\Sigma)$ and $H^*(\Sigma, \partial\Sigma)$ by considering the $t$-family of forms $\alpha(t)\in \Omega^{cl}(\Sigma)$, the difference $\alpha(t_1)-\alpha(t_2)$ is an exact form on $\Sigma$ for any $t_1, t_2\in \mathbb{I}$. Now $\frac{\partial}{\partial t}\alpha(t) $ is an exact form on $\Sigma$ parameterised also by $t$. Note that this condition also says that the representative for $H^0$ classes can not be deformed, and so has to be independent of $t$. For the non-trivial case, the above information can be encoded into a close form  $\tilde \alpha:=\alpha(t)+\alpha_h(t)dt$ on $\Sigma\times \mathbb{I}$, which represents a given cohomology class of $H^*(\Sigma\times \mathbb{I})\cong H^*(\Sigma)$. The closeness of $\tilde\alpha$ specifies precisely that $d_\Sigma \alpha(t) = 0$ and $d_\Sigma \alpha_h(t)+(-)^{|\alpha|}\frac\partial{\partial t}\alpha(t) = 0$.
The above argument does not depend on the choices of boundary conditions, and hence also works for representatives of $H^*(\Sigma, \partial \Sigma)$. So overall, we are able to promote $\Xi$ to a $t$-parameterised version to keep track of choice of cohomological representatives, which we denote by $\tilde {\Xi}$, such that
$\tilde \Xi(x,y;t) = \Xi(x,y;t)+\Xi_h(x,y;t)dt$. By construction, $\Xi$ is $d_\Sigma$-closed\footnote{Here and later within this proof we use $d_\Sigma$ to denote the pullback of de Rham operator over various configuration spaces of $\Sigma$.}, while $\Xi_h$ satisfies that \[d_\Sigma \Xi_hdt+d_t\Xi = 0.\] 

Secondly, different choices of connection $\theta$ on $OFr_\Sigma$ and the cutoff function $\rho$ can be connected by promoting both to the parameterised version. Namely consider a connection $\tilde \theta$ on $OFr_\Sigma\times \mathbb{I}$ (, the latter understood as a pullback bundle of $OFr_\Sigma$ along $\Sigma\times\mathbb I\to \Sigma$), then on a local trivialisation, the connection one-form locally is  $\tilde \theta = \theta(t) +\theta_h(t)dt$. Likewise, our choice on the smooth function $\rho$ supported at the neighbourhood of diagonal strata can be encoded in a parameterized version $\tilde \rho$, which is a smooth function
 on $\overline{{\rm Conf}(2,\Sigma)}\times \mathbb{I}$ whose restriction to any $t\in \mathbb I$ is a well-defined cutoff function.

The extended propagator is denoted by $\tilde P(x,y;t) = P(x,y;t)+P_h(x,y;t) dt$ on $\overline{{\rm Conf}(2,\Sigma)}\times \mathbb{I}$ such that 
\[d_\Sigma P = \Xi, \quad  d_\Sigma P_hdt  = \Xi_hdt - d_t P,\] 
following similar procedure to the construction of ordinary propagators. 

The effective action is defined by the computation using the family propagator
$\tilde I_{\rm eff} =I_{\rm eff}(t)+I_h(t)dt\in  \Omega^*_{\mathbb I}\otimes Obs(\Sigma).$
So we have that $I_{\rm eff} = {\rm log}(e^{\hslash\partial_{ P}} e^{\frac{I}{\hslash}})$ and \[I_h = {\rm log}(\hslash \partial_{P_hdt} e^{\hslash\partial_{ P}} e^{\frac{I}{\hslash}}).\]
Graphically $I_hdt$ is obtained effectively by replacing one of the propagator in $I_{\rm eff}$ by $P_h dt$.

Similar to the previous case, we have that
\begin{eqnarray*} Q\tilde I_{\rm eff}[\phi] &=& \sum_{\Gamma} \frac1{|{\rm Aut}\,(\Gamma)|} \int_{\partial_\delta \overline{{\rm Conf}(\Gamma,\Sigma)}} \prod_{e\in {\rm E}(\Gamma)} \hslash \partial^e_{\tilde P} \cdot \prod_{v\in V(\Gamma)} \frac{I_v[\phi]}\hslash \\
&-& \sum_{\Gamma} \frac1{|{\rm Aut}\,(\Gamma)|} \sum_{g\in {\rm E}(\Gamma)} \int_{\overline{{\rm Conf}(\Gamma,\Sigma)}} \hslash \partial_{d_\Sigma \tilde P}^g \cdot\prod_{e\in {\rm E}(\Gamma)\backslash \{g\}} (\hslash \partial^e_{\tilde P}) \cdot  \prod_{v\in V(\Gamma)} \frac{I_v[\phi]}\hslash .
\end{eqnarray*}
The potential obstruction to the family QME is the boundary contribution, which can be shown to vanish parallelising the analysis in the ordinary QME.

Now with the family version of QME proved, whose $\Omega_{\mathbb I}^0$-component gives
\[Q I_{\rm eff}+\hslash \Delta I_{\rm eff}+\frac12\{I_{\rm eff}, I_{\rm eff}\}=0.\]
The $\Omega_{\mathbb I}^1$-component gives
\[ Q I_h +\hslash \Delta I_h+\{I_{\rm eff}, I_h\}+\hslash \Delta_h I_{\rm eff}+\frac12\{I_{\rm eff}, I_{\rm eff}\}_h = d_t I_{\rm eff},\]
where $\Delta_h$ and $\{-,-\}_h$ are $dt$-component of operations $\tilde \Delta$ and $\{-,-\}\tilde{}$ respectively.

The first three terms makes a parameterised BV coboundary. 

The expression $\hslash \Delta_h I_{\rm eff}+\frac12\{I_{\rm eff}, I_{\rm eff}\}_h$ graphically comes from replacing a propagator in $I_{\rm eff}$ by $\Xi_hdt$, which provides an infinitesimal homotopy for deformation of $\Xi$ along $t\in \mathbb I$, i.e., $[Q, \hslash \partial_{\Xi_hdt} ] = \hbar \partial_{d_t\Xi}$.

\end{proof}

The family QME does not quite express the deformation $d_t I_{\rm eff}$ as a quantum BV exact term, since part of the gauge choices determines the BV differential(, recall that the BV operator $\Delta$ and the anti-bracket $\{-,-\}$ are defined via the diagonal class $\Xi$). 
If one compare to the family version of QME in \cite{CG}, the above version is more general in the sense that in \cite{CG}, the author considered the choices in propagator, but suppose that there is no change in the heat kernel side. In our case, the heat kernel side is given by the diagonal class $\Xi$, which further gets promoted to family version due to different choices of cohomology representatives in $\Sigma$. On the other hand, one thing that does keep track of the parameterised $\Xi$  is the renormalisation flow \cite{Co08}, which relates theories with different energy scale (a.k.a, different quantum BV theories). So the $t$-flow $d_t I_{\rm eff}$ should be viewed as an analog of the compatibility between the renormalisation flow equation, and the QME. 
The expression for $d_t I_{\rm eff}$ guarantees that there exists an canonical quasi-isomorphism between the quantum global observables with different gauge choices. In this sense, fixing the gauge choices are similar to specifying coordinates, yielding quasi-isomorphic computation at cohomology level.

\section{The structure of observables}
\label{secobserv}

\subsection{Global observables}
\label{obs:glo}

Vacuum graphs are Feynman graphs that have no external lines. In particular, we look at connected graphs which contribute to the constant term of the effective action, viewed as a functional over the space of field. These are Feynman graphs with vertices coming from classical action and without external legs, which formally is generated by
\[ {\rm log}( e^{\hslash\partial_P} e^{I[\phi]/\hslash})|_{\phi=0}\]
in powers of $\hslash$. 
Since this does not depend on specific choices of differential structures on $\Sigma$, this could be a candidates for topological invariant on $\Sigma$ and a Poisson invariant over $M$.

\begin{proposition}
The total vacuum graph contribution is a topological invariant with respect to the worldsheet topology. If the Poisson structure is regular (including the trivial case), then the vacuum digram contribution vanishes up to homotopy.

\end{proposition}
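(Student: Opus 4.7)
The plan is to invoke the parameterised Quantum Master Equation of Theorem~\ref{thmqme} for both statements. For topological invariance, restrict the family QME to the vacuum sector $I_{\rm eff}^{\rm vac}(t) := I_{\rm eff}(t)|_{\phi=0}\in \mathbb{R}[[\hslash]]$: its $\Omega^1_{\mathbb{I}}$-component expresses $d_t I_{\rm eff}^{\rm vac}$ as a BV coboundary, and the data entering the propagator (metric with $\partial\Sigma$ complete geodesic, connection on $T_\Sigma$, cutoff $\rho$, and cohomology representatives in $\Xi$) lie in a contractible parameter space. Hence any two choices give the same vacuum number up to homotopy. Transporting all data along a diffeomorphism $f:\Sigma_1\to \Sigma_2$ and then invoking this gauge-independence shows that the answer depends only on the diffeomorphism (equivalently, topological) type of $\Sigma$.

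For vanishing in the regular case, first treat $\Pi\equiv 0$. Then every vertex comes from an $l_n$-interaction $\langle\eta,l_n(X^{\otimes n})\rangle_1$ with exactly one $\eta$-leg, and each propagator contracts one $\eta$-end with one $X$-end. Counting $\eta$-ends gives $E=V$ for any connected vacuum graph, so the first Betti number is $E-V+1=1$; absence of external legs forbids trees attached to the cycle; Prop.~\ref{tad}, Prop.~\ref{par} and Prop.~\ref{prop:l0} then force the graph to be a pure cycle of bivalent $l_1$-vertices. The resulting contribution is a supertrace of a power of an operator valued in $\Omega^{\geq 1}_M\otimes\mathfrak{h}$, which truncates (by $l_1$ raising form-degree on $M$) to a characteristic-type expression for $T_M$; since the connection on $T_M$ belongs to a contractible affine space, a further application of the family QME trivialises this expression up to homotopy.

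For a general regular $\Pi$, I would interpolate by $\Pi_s := s\Pi$, $s\in[0,1]$, which is regular for all $s$, and extend the family QME to include $s$ as an additional parameter controlling the target $l_\infty$-structure. This yields $d_s I_{\rm eff}^{\rm vac}(s) = (\text{BV coboundary})$, so the vacuum at $\Pi$ agrees up to homotopy with the vacuum at $\Pi=0$, which vanishes by the previous paragraph. An alternative route is to apply Weinstein's splitting locally to factor the target into a symplectic and a trivial-Poisson piece and argue that each vacuum Feynman integral factorises accordingly, so that the trivial-Poisson factor kills the whole integral.

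The main obstacle I anticipate is the rigorous incorporation of the Poisson-deformation parameter $s$ into the family QME: Theorem~\ref{thmqme} as stated controls only the propagator side, not deformations of the target $l_\infty$-structure. Verifying that the curved $l_\infty$-structure varies contractibly in $s$ (a deformation-theoretic analogue of Prop.~\ref{targetalgebra}) and that the boundary-strata arguments --- Kontsevich's vanishing for $n\geq 3$ collisions, together with the curved $l_\infty$ identity at double collisions --- remain intact along the family is the key technical step. A secondary obstacle is showing that the one-loop supertrace in the trivial-Poisson case is genuinely cohomologically trivial rather than some nontrivial characteristic number of $M$; this would need either a direct computation or an explicit homotopy varying the connection on $T_M$ inside its affine space of choices.
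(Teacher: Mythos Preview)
Your treatment of topological invariance via the parameterised QME is essentially the paper's argument.

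For the vanishing statement, however, the paper takes a much more direct route than your deformation scheme. The key observation you are missing is a \emph{dimension count on the configuration space}: a vacuum graph $\Gamma$ is integrated over $\overline{{\rm Conf}(V(\Gamma),\Sigma)}$, which is $2|V(\Gamma)|$-dimensional, and the integrand is a product of $|E(\Gamma)|$ one-forms, so non-vanishing forces $|E(\Gamma)|=2|V(\Gamma)|$. On the other hand each edge consumes one $\eta$-leg, and each vertex supplies at most two $\eta$-legs (two from a $\Pi$-vertex, one from an $l_n$-vertex), giving $|E(\Gamma)|\le 2|V(\Gamma)|$ with equality only when \emph{every} vertex is a $\Pi$-vertex. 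Counting $X$-legs then shows at least one such $\Pi$-vertex must carry incoming $X$-edges, i.e.\ must come from a non-constant Taylor coefficient of $\Pi$. For regular $\Pi$ one chooses local coordinates in which $\Pi$ has constant components; no such vertex exists, and the contribution vanishes outright. The residual dependence on the coordinate choice (equivalently on the connection on $T_M$) is then disposed of by the contractibility of the space of $l_\infty$-structures, exactly the kind of homotopy you invoke.

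This single dimension count handles the trivial and the regular cases simultaneously and removes both obstacles you flagged. In particular your $\Pi=0$ analysis is over-engineered: from $E=V$ (your $\eta$-count) together with $E=2V$ (the dimension count) you get $V=0$ immediately, so there is no one-loop supertrace to worry about. And your interpolation $\Pi_s=s\Pi$ is unnecessary; it would also require an extension of Theorem~\ref{thmqme} to deformations of the target $l_\infty$-structure, which the paper does not supply and which your proposal correctly identifies as the main gap.
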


\begin{proof}
We have shown that the dependence of effective action on the gauge choice of propagator(, and hence the differential structure of $\Sigma$) results in cohomologically trivial terms in the quantum observables $Obs^q$. 
This applies to the vacuum graphs. Those graphs thus become topological invariants of the worldsheet $\Sigma$. 

For any graph $\Gamma$ that contributes, the forms degree of the integrand must be equal to twice of the vertices number, i.e., $E(\Gamma) = 2 V(\Gamma)$. Each edge comes from the contraction of an $\eta$-field and a $X$-field. Since each vertices from the action can at most contribute $2$ $\eta$-fields, 
\[V(\Gamma)\leq E(\Gamma)\leq2V(\Gamma).\] 
Now the second equator holds only when each vertex comes from the $\Pi$-term. In that case, the counting of $X$-fields states 
\[\sum_{v\in V(\Gamma)} {\rm deg}(v) - 2|V(\Gamma)| = |E(\Gamma)|,\] hence the total degree of the graph is four times the number of vertices. Graphically we shall use out-going half-edge to denote $\eta$-field, then the vacuum graphs are generated by graphs, whose vertices all have two out-going edges. And at least one $\Pi$-vertex have one or more in-coming edges. In case of regular Poisson, the Poisson bivector field $\Pi$ is constant. So there exists a choice of coordinate charts on $M$ such that $\Pi$ exhibits constant component. Then the theory does not have vertices with two $\eta$ fields and one or more $X$ fields, which can only be induced by non-constant part of $\Pi$.

So far out discussion depends on a choice of coordinate charts over the Poisson manifold $M$. The choice of coordinate charts determines a connection on $T_M$, and thereby choose a particular $l_\infty$ structure over $\mathfrak g$. However, difference choices of connection on $T_M$ results in contractible homotopies, which does not change our conclusion up to a coboundary term in $Obs$. 
\end{proof}

For the non-vacuum graphs, if the Poisson structure is trivial, the theory is a cotangent theory, where there are only one-loop graphs. In some cases (typically the worldsheet geometry is fixed) there are result showing that those one-loops computes geometric genus associated to the target geometry \cite{Co11, GG_ahat}. In our theory, even with the presence of higher loops, it still makes sense to single out one loops and probe their meaning. Indeed, if we assign a scaling symmetry on $\hslash$, then one-loop graphs give the scaling-invariant contribution. It would be interesting to see what does these graphs tell us about Poisson geometry. Indeed, for all theories of classical BV type, where the space of field is a shifted $l_\infty$ algebra with a symmetric structure, the one loop graphs can be interpreted as certain components of the corresponding PROP of the Lie structure. This enables us to relate the one loop result to certain classes in the Gelfand-Fuks cohomology of the $l_\infty$ algebra. 

\subsection{(Non-)Local observables: factorisation product}
\label{obs:loc}

For any function over the target $B\mathfrak h$ and any point in the boundary of the worldsheet $x\in \partial\Sigma$, there is a corresponding classical observable supported at the point:
\[r^\partial_x:  \mathcal O_{B\mathfrak h}\to Obs_U, x\in U\cap \partial \Sigma\]
which is induced by the evaluation map of fields $\delta_x: \Omega_\Sigma\otimes \mathfrak h\to \mathfrak h: \alpha\otimes X\mapsto \alpha(x)X$. When both the open $U$ and its intersection with the boundary $U\cap \partial\Sigma$ contractible, then local constancy of local observables can be checked since 
\[\mathfrak h\to \Omega_U\otimes \mathfrak h\oplus \Omega_{U, U\cap\partial\Sigma}\otimes \mathfrak h^\vee[-1] \] is an equivalence of $l_\infty$ algebras\footnote{Consider the inclusion of ideal map $\Omega_{U, U\cap\partial\Sigma}\otimes \mathfrak g\to \Omega_U\otimes \mathfrak g$.}. This 
 guarantees that the map $r^\partial_x$ is a weak equivalence. 
Similarly, there is a weak equivalence
\[r_z: \mathcal O_{B\mathfrak g}\to Obs_U, z\in U\subseteq \Sigma^\circ\]
for any bulk point $z$ and for its contractible small neighbourhood $U$.

In the Chern-Simons type of model we considered, the target $B\mathfrak g$ is a formal stack encoding the polyvector field $PV_M$ together with the differential defined by Poisson structure,  which has the structure of a Gerstenharber algebra (a.k.a, a model of $P_2$ algebra). This induces the homotopy $P_0$ structure on local classical observables over any contractible open subset of the inner worldsheet $\Sigma^\circ$. 
It has been shown that for an extended $2$-dimensional TFT, the local observables form a $\mathbb E_2$ algebra \cite{CG, Schei14}.

The product map is given by the quantization map
\[ {\rm Conf}(2,U) \times C^*(\mathfrak g)\otimes C^*(\mathfrak g)\to Obs^q\to C^*(\mathfrak g)[[\hslash]]\]
\[ ((z,w), f,g)\mapsto {\rm log}(e^{\hslash \partial_P}e^{I/\hslash}r_z(f) r_w(g))\mapsto \sum_{\Gamma(z,w)} c_{\Gamma(z,w)} B^{\Gamma(z,w)}(f,g).\]
The coefficient $c_{\Gamma(z,w)}$ comes from integration of admissible graph $\Gamma$ with two given bulk vertices, and $B^{\Gamma(z,w)}$ is the bidifferential operator defined from $\Gamma$.
The last arrow is given by restricting to vacuum graph limit (i.e., consider vacuum graphs which have no insertion of background fields, or external legs). The analytic part gives a number associated to the graph integration, while the algebraic part gives rise to a bi-differential operator $B^{\Gamma(z,w)}$ on $C^*(\mathfrak g)[[\hslash]]$, which then controls the $\mathbb E_2$ multiplication \footnote{Note that due to the failure of the locality even at propagator level, the restriction map does not land in local observables, but comes from the evaluation map of observables on the space of fields. Due to this reason, we added ``(non-)" in front of local in the subtitle of this section.}. The summation of the final term is over all connected graphs with two special separated bulk points $z,w$, on which the corresponding vertices are $f$ and $g$ respectively. 

To simplify the notation, in the remaining part of the paper, we shall use $W(P, f(z)g(w))$ to denote ${\rm log}(e^{\hslash \partial_P}e^{I/\hslash}r_z(f) r_w(g))$. Following the working definition of local observables from \cite{CG}, a quantum observable $O$ is local over an open $U\subset \Sigma$ if there exists a choice of propagators such that $W(P, O)$ is supported over $U^{\times n}$ for certain $n\geq1$. If by construction, we could choose the diagonal class $\Xi$ to be supported within any given neighbourhood of the diagonal ideal, then it is obvious that $W(P, f(z)g(w))\in Obs^q(U)$. However, we do not assume such choice to exists. Also, even in the flat case \cite{Kont97}, the propagator does not necessarily obey the locality condition. So we shall not stick to the local observables in the process of defining the structure. Next, we have Thm.~\ref{obser-prop}. We shall give the prove by the end of this section.

Further more, this also gives a similar map for boundary observables
\[{\rm Conf}(2,\partial U) \times C^*(\mathfrak h)\otimes C^*(\mathfrak h)\to Obs_U\to C^*(\mathfrak h)[[\hslash]]\]
\[ ((x,y), f,g)\mapsto W(P, f(x)g(y))\mapsto\sum_{\Gamma(x,y)} c_{\Gamma(x,y)} B^{\Gamma(x,y)}(f,g).\] 
In the boundary case, ${\rm Conf}(2,\partial U)$ has two connected components, corresponding to the associative multiplication of deformation quantization and its dual. Note that this is a generalisation of Kontsevich's construction of star product --- Kontsevich defined the product in the case where the worldsheet is flat, but here the multiplication comes from integrations over configuration space of Riemann surfaces. The summation of the final term is over all connected vacuum graphs with two special separated boundary points $x,y$. 
Combining the bulk and boundary multiplication, this makes $(C^*(\mathfrak g), C^*(\mathfrak h))$ into a Swiss-Cheese algebra in the sense of \cite{Vor97}.

\begin{proof} We shall check the dependence of the correlation on the insertion points, and on our choices of propagators, and hence finish the proof of Prop.~\ref{obser-prop}. So as before, we extend the action and the propagator to the parameterised case, i.e., we consider the correlation on ${\rm Conf}(\{x,y\}, \partial U)\times \mathbb{I}$. 
\begin{eqnarray*}d W(\tilde P, f(x) g(y))[\phi] &=& {\rm boundary\,contributions}+\\
&& W(\tilde P, d_\Sigma\tilde P, f(x) g(y)) [\phi]+W(\tilde P, f(x)g(y))[\phi; Q\phi].
\end{eqnarray*}
The new notation $W(\tilde P, d_\Sigma\tilde P, f(x) g(y))$ corresponds to a summation of similar Feynman graph contribution where one propagator $\tilde P$ is replaced by the parameterised diagonal class $d_\Sigma\tilde P$.

The boundary contribution could come from integrating over two individual strata. Firstly, whenever we send any bulk point of a graph (not the boundary insertion points) to the boundary, the integration vanishes. Secondly, whenever two or more bulk points collide, this corresponds to a collapsing of subgraph, which we can view as vacuum graph. It's vanishing is due to a similar argument of Kontsevich's vanishing result. 

As before, we denote by $\tilde \Delta$ and $\{-,-\}\tilde{}$ the BV operator and anti-bracket defined using the parameterised diagonal class $\tilde \Xi$. Now the second term $W(\tilde P, d_\Sigma\tilde P, f(x) g(y)) [\phi]$ is given by Feynman graphs with two fixed boundary insertion points and one $\tilde \Xi-d_t P$-edge. The $\tilde \Xi$-edge graphs correspond to two expressions, depending on whether this $\tilde \Xi$ edge is separating edge or not. In the former case, the graphs contribute to $\{\tilde{I}_{\rm eff}, W(\tilde P, f(x) g(y))\}\tilde{}$. In the latter case, the graphs correspond to $\tilde \Delta W(\tilde P, f(x) g(y))$. The $d_t P$-edge case gives the $t$-flow term for the quantum observable. 

The third term $W(\tilde P, f(x)g(y))[\phi; Q\phi]$ corresponds to the graph integration where one external leg, which inserts field $\phi$, is replaced by the insertion of $Q\phi$. By definition, this is $QW(\tilde P, f(x)g(y))[\phi]$.  

So we have that 
\begin{eqnarray*} 
(d+d_t) W(\tilde P, f(x)g(y))  &=& QW(\tilde P, f(x)g(y)) +\hslash\tilde \Delta W(\tilde P, f(x)g(y))\\
&&  +\{I_{\rm eff}, W(\tilde P, f(x)g(y))\}\tilde{},\end{eqnarray*}
similar to the proof of parameterised QME.
At the $\Omega^0_{\mathbb I}$-component, the homotopy equation gives that
 \begin{eqnarray*} 
d W( P, f(x)g(y)) &=& QW( P, f(x)g(y))  +\hslash \Delta W( P, f(x)g(y)) \\
&&+\{I_{\rm eff}, W( P, f(x)g(y))\}.\end{eqnarray*}
The right hand side is a BV coboundary. So up to BV cohomology, the parameterised multiplication has no local dependence on the configuration space ${\rm Conf}(\{x,y\}, U\cap \partial \Sigma)$.

At the $\Omega^1_{\mathbb I}$-component, the homotopy equation reads
\begin{eqnarray*} 
d_t W( P, f(x)g(y))&=&- d W(P, P_hdt, f(x)g(y)) [\phi]  \\
 &&+QW(P, P_hdt, f(x)g(y)) [\phi]+\hslash \Delta W(P, P_hdt, f(x)g(y)) \\
 &&+\{I_{\rm eff}, W(P,P_hdt, f(x)g(y))\}\\
 &&+ \hslash  \Delta_h W(P,  f(x)g(y)) +\{I_{\rm eff}, W( P, f(x)g(y))\}_h\\
 &&+\{I_h, W( P, f(x)g(y))\}.
\end{eqnarray*}
Since we only care about the gauge dependence at this level, one can ignore the first line at the right hand side. The second to third lines give a BV exact term, while the last two lines show the consistency among different gauge choices, whose integral form allows one to identify computations using different gauge choices.
\end{proof}

\subsection{On boundary observables}
\label{obs:bdy}

The product structure for boundary observables gives an associative algebra structure on $C^*(\mathfrak h)$, which is a formal geometry avatar for the smooth algebra 
\[{\rm Conf}(2, \partial \Sigma)\times C^\infty(M)\times C^\infty(M)\to C^\infty(M)[[\hslash]].\]

It interesting to look into relevant Feynman graphs.
Firstly, there is no tadpole contribution. Secondly, as we previously explained, there can not be external $\eta$-legs.
There are the $l_n$-vertices coming from the cotangent theory sector, as well as the $\Pi$-vertices. This gives rise to the admissible graphs as in Kontsevich defined \cite{Kont97} corrected due to the nontrivial target geometry and the non-constancy of Poisson bivector field. Each summand is a bidifferential operator acting on those two functions multiplied by a graph evaluation, which is an integration of products of propagators over the (compactified) configuration space of $n$ points. 

Given an $l_\infty$ algebra $(\mathfrak h,\{-,-\}_0)$ over $\Omega_M$ equipped with a symplectic pairing, there is a one-dimensional field theory given by the following data:
\begin{enumerate}
\item The space of field is given by $\Omega_{S^1}\otimes_{\mathbb R} \mathfrak h[-1]$, which is an $l_\infty$  $-1$-symplectic space. The shifted symplectic structure is given by the pairing among forms over $S^1$ and the ordinary symplectic structure over $\mathfrak g$.
\item The classical action is given by $\int_{S^1}\langle X, \frac12 dX+\sum_{n\geq0}\frac{1}{(n+1)!} l_n(X)\rangle_0$
\item The classical observables are weakly equivalent to 
\[\widehat{\rm Sym} \,\left(\Omega_{S^1}\otimes_{\mathbb R} \mathfrak h^\vee[1]\right)\,, d_{S^1}+d^{CE}\]
by Atiyah-Bott lemma.
\item The BV quantization is unobstructed. 
\end{enumerate}

The quantization of the aforementioned theory has been studied by \cite{GG_ahat,GLL}. We shall show that this theory is a boundary theory of Poisson sigma model in the symplectic case. To achieve this, we start with a symplectic manifold $M$, and choose the $l_\infty$ structure to be one which is compatible with the symplectic form. This can be done when we choose the symplectic connection to be the $l_1$ structure on $\mathfrak h$. This uniquely fixes the $l_\infty$ structure, and guarantees that the symplectic pairing is invariant, see Appendix for details.  
One byproduct of such choice is that now the symplectic form(, and hence the Poisson bivector) is constant along the parallel coordinates defined by the connection. Hence all the $\Pi$-vertices corresponding to the Poisson structure is simplified to a bilinear term only. Namely we have that 
\[S^{BV}=\int_\Sigma\langle \eta, dX+\sum_{n\geq0}\frac1{n!}l_n(X)+\frac12\Pi(\eta)\rangle_{1}\]
for $X\in \Omega_{\Sigma}\otimes \mathfrak h$ and $\eta\in \Omega_{\Sigma}\otimes \mathfrak h^\vee[-1]$.
The differential in the classical observables hence comes from two parts: that from cotangent theory, and the extra part of $l_1$ structure from $\Pi$. The differential \[d^K X(x)=\Pi (\eta(x))+dX(x)+\sum_{n\geq0}\frac1{n!} l_n(X^{\otimes n}(x))\]
can be solved ``roughly'' by inverting $\Pi$. 
The Maurer-Cartan element satisfies the expression 
\[\Pi (\eta(x))+dX(x)+\sum_{n\geq0}\frac1{n!} l_n(X^{\otimes n}(x))=0,\]
which behaves differently when we take the worldsheet point $x$ to be at the boundary and at the bulk. If $x\in \partial \Sigma$, by our choice of boundary condition we have that 
\[dX(x)+\sum_{n\geq0}\frac1{n!} l_n(X^{\otimes n}(x))=0\]
and this identifies $X$ with a boundary Maurer-Cartan element of $\Omega_{\partial\Sigma} \otimes\mathfrak h$. If $x\in \Sigma\backslash\partial \Sigma$,
the equation says that the solution for $\eta$ is given by the obstruction to a bulk Maurer-Cartan element for any bulk extension of a boundary Maurer-Cartan element. This clearly shows that the theory has only interesting geometry at the boundary observables.

The action, upon solving the equation of motion for $\eta$, is given by the restriction on the subspace spanned by $$\left(\eta(x) +\Pi^{-1} [d X(x)+\sum_{n\geq0}\frac1{n!} l_n(X^{\otimes n}(x))]\right).$$ So the action is given by
\begin{eqnarray*}S[X]&=&-\int_\Sigma  \frac12\langle \Pi^{-1} d  X+\sum_{n\geq0}\frac1{n!} \Pi^{-1}_\epsilon l_n( X^{\otimes n}), \Pi^{-1} d X+\sum_{n\geq0}\frac1{n!} \Pi^{-1} l_n( X^{\otimes n})\rangle_1 
.\end{eqnarray*}

The action is a total derivative which, upon restricting to the boundary, gives the desired 1d action.
\[S^{\rm 1d}[X] = \int_{\partial\Sigma} \langle X, \frac12dX+\sum_{n\geq0}\frac1{(n+1)!} l_n(X^{\otimes n})\rangle_0.\]
Terms such as \[\int_\Sigma \frac12\langle \frac1{n!}\Pi^{-1} l_n( X^{\otimes n}), \frac1{m!}\Pi^{-1} l_m( X^{\otimes m})\rangle_1\] combines to a constant functional as follows.
Consider the variation of such term, which leads to 
 \[\frac12 \int_\Sigma \langle \frac1{(n-1)!} l_n( \delta X, X^{\otimes n-1}), \frac1{m!} l_m( X^{\otimes m})\rangle_0 + \langle \frac1{n!} l_n( X^{\otimes n}), \frac1{(m-1)!} l_m( \delta X, X^{\otimes m-1})\rangle_0.\]
The symplectic pairing $\langle-,-\rangle_0$ is again invariant with respect to the $\{l_n\}$ operations, so can be rearranged into
\[\frac12\sum_{m+n}\int_\Sigma \frac{|{\rm shuffle}(m,n-1)|}{(n+m-1)!}\langle  l_n( l_m( X^{\otimes m}) , X^{\otimes n-1}), \delta X\rangle_0 \] and \[\frac12\sum_{m+n} \int_\Sigma \frac{|{\rm shuffle}(m-1,n)|}{(n+m-1)!}\langle \delta X, l_m( l_n( X^{\otimes n}) , X^{\otimes m-1})\rangle_0.\]
Those two summations each cancel among themselves due to the constraints on the $l_\infty$ structure
\footnote{Technically the action is a functional valued in cdga $(\Omega_M, d_M)$, and we drop $d_M$-exact terms.}. 

For the future convenience,  we adjust the one-dimensional boundary action a little bit by adding in a non-interacting sector
\[S^{\rm 1d}[X, \eta] = \int_{\partial\Sigma} \langle X, \frac12dX+\sum_{n\geq0}\frac1{(n+1)!} l_n(X^{\otimes n})\rangle_0+\frac12\int_\Sigma \langle \eta,\Pi(\eta)\rangle_1.\qquad (\dagger)\]
One can see easily that from the boundary point of view, with the extra term, nothing sensitive could be changed.
However, the previous analysis can not be considered as a direct proof in the $l_\infty$ algebra setting. The homotopy $\Pi^{-1} : \Omega\otimes \mathfrak g^\vee\to \Omega_D\otimes \mathfrak g[1]$ is not well-defined, since its image should be in Dirichlet forms. This technical issue is by-passed by the following proposition, and where we only understand $\Pi^{-1}$ as a pairing at the target $l_\infty$ algebra $\mathfrak h$.
\begin{proposition}
The BV action for Poisson sigma model and the 1d action given by $(\dagger)$ are related by a homotopy $S+\lambda \{H, S\}$ where \[H=-\int_{\Sigma}\Pi^{-1}(X, \frac12dX+\sum_{n\geq0}\frac1{(n+1)!} l_n(X^{\otimes n})).\]
\label{classicalbdythy}
\end{proposition}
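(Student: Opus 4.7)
The plan is to compute $\{S^{BV}, H\}_0$ directly from its definition as the antibracket induced by the $(-1)$-shifted symplectic pairing on $\mathcal{E}$, and to match it against $S^{BV} - S^{1d}$ term by term. The computation breaks into a purely algebraic manipulation on the target side (invariance of the symplectic pairing) and a Stokes-theorem step on the worldsheet side, which together produce a bulk contribution cancelling most of $S^{BV}$ and a boundary contribution reproducing the 1d action.

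First I would use the fact that $H = -\int_\Sigma \Pi^{-1}(X, \tfrac12 dX + \sum_{n\geq 0} \tfrac{1}{(n+1)!} l_n(X^{\otimes n}))$ depends only on $X$, so that of the two summands of the antibracket only
\[\{S^{BV},H\}_0 = \int_\Sigma \Big\langle \frac{\delta S^{BV}}{\delta \eta}\,,\,\frac{\delta H}{\delta X}\Big\rangle_1\]
survives (up to a fixed sign fixed by the bidegree of $H$). The $\eta$-variational derivative of $S^{BV}$ is the Maurer--Cartan function for $\eta$,
\[\frac{\delta S^{BV}}{\delta \eta} = dX + \sum_{n\geq 0} \tfrac{1}{n!} l_n(X^{\otimes n}) + \Pi(\eta),\]
and in particular $\Pi$ applied to this term will return, up to the sign, precisely the $\tfrac12\int_\Sigma \langle \eta, \Pi(\eta)\rangle_1$ term that we want to survive into $S^{1d}$, together with the bulk terms $\int_\Sigma \langle \eta, dX + \sum \tfrac{1}{n!}l_n(X^{\otimes n})\rangle_1$ that appear in $S^{BV}$ but not in $S^{1d}$.

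The core step is the computation of $\delta H$. Here I would invoke the invariance of the degree-$0$ pairing $\omega = \Pi^{-1}$ (guaranteed by the compatibility of the symplectic connection with the $l_\infty$ structure, as recalled just before the statement). Invariance says $\omega(X, l_n(X^{\otimes n})) = \pm \tfrac{1}{n+1}$ times the fully cyclic expression, so that varying $X$ in one slot and applying the graded-(anti)symmetry of $\omega$ to redistribute the $n$ identical $X$'s produces
\[\delta\!\!\int_\Sigma \tfrac{1}{(n+1)!}\omega(X, l_n(X^{\otimes n})) \;=\; \int_\Sigma \tfrac{1}{n!}\omega(\delta X, l_n(X^{\otimes n})),\]
which is exactly what we need to match the $\tfrac{1}{n!}$ prefactor appearing in $\delta S^{BV}/\delta \eta$. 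For the kinetic half-term $\tfrac12 dX$, the analogous identity requires Stokes' theorem: writing $\omega(X, \tfrac12 d\,\delta X) = (-)\omega(\tfrac12 dX, \delta X) + d(\text{boundary})$ up to a sign, the bulk parts combine with $\omega(\delta X, \tfrac12 dX)$ to give $\omega(\delta X, dX)$, while the boundary piece produces $\tfrac12 \int_{\partial \Sigma} \omega(X, \delta X)$.

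Assembling these, $\tfrac{\delta H}{\delta X}$ equals $-\Pi^{-1}\bigl(dX + \sum \tfrac{1}{n!}l_n(X^{\otimes n})\bigr)$ in the bulk plus a $\partial\Sigma$-supported piece. Pairing with $\tfrac{\delta S^{BV}}{\delta \eta}$ and using $\Pi\circ\Pi^{-1} = \mathrm{id}$ at the level of target-algebra pairings (this is the point of the remark after the statement, which warns that $\Pi^{-1}$ only makes sense at the target level, not as a field-space homotopy), the bulk of $\{S^{BV},H\}_0$ reproduces $\int_\Sigma \langle \eta, dX + \sum \tfrac{1}{n!}l_n(X^{\otimes n})\rangle_1$, i.e.\ exactly the terms of $S^{BV}$ that are absent from $S^{1d}$; the boundary contribution reorganises, again via invariance of $\omega$, into $-\int_{\partial\Sigma} \langle X, \tfrac12 dX + \sum \tfrac{1}{(n+1)!}l_n(X^{\otimes n})\rangle_0$, which is precisely $-S^{1d}$ restricted to $\partial\Sigma$. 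This yields $\{S^{BV},H\}_0 = S^{BV} - S^{1d}$, equivalently $S^{1d} = S^{BV} + \{H,S^{BV}\}_0$.

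The main obstacle I expect is bookkeeping: tracking all graded signs coming from the $(-1)$-shift on $\mathcal{E}$, the form degrees on $\Sigma$, and the graded symmetry/antisymmetry of $\omega$ and of the $l_n$, and confirming that the invariance identity can be iterated consistently to turn the $\tfrac{1}{(n+1)!}$-coefficients in $H$ into the $\tfrac{1}{n!}$-coefficients in the bulk EOM. A secondary subtlety is to verify that the Dirichlet boundary condition on $\eta$ kills every boundary term originating from the $\eta$-sector, so that the only surviving $\partial\Sigma$-contribution is the one that assembles into $S^{1d}$.
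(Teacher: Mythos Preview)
Your overall strategy is precisely what the paper means by ``the computation is straightforward'': compute $\{S^{BV},H\}_0$ directly using invariance of $\langle-,-\rangle_0$ and Stokes' theorem. But there is a genuine gap in your bulk accounting. When you pair $\frac{\delta S^{BV}}{\delta\eta}=dX+\sum\frac{1}{n!}l_n(X^{\otimes n})+\Pi(\eta)$ against the bulk piece $\frac{\delta H}{\delta X}=-\Pi^{-1}\bigl(dX+\sum\frac{1}{n!}l_n(X^{\otimes n})\bigr)$, the $\Pi(\eta)$ part indeed returns $\int_\Sigma\langle\eta,\,dX+\sum\frac{1}{n!}l_n\rangle_1$ via $\Pi\circ\Pi^{-1}=\mathrm{id}$, but the remaining $(dX+\sum l_n)$ part contributes an additional pure-$X$ term
\[
-\int_\Sigma\Big\langle dX+\sum_n\tfrac{1}{n!}l_n(X^{\otimes n}),\,dX+\sum_m\tfrac{1}{m!}l_m(X^{\otimes m})\Big\rangle_0
\]
which you have silently dropped. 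It does not vanish by graded antisymmetry of $\omega$ alone.

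This missing term is exactly the object the paper treats in the paragraph immediately preceding the proposition: the $\langle l_n,l_m\rangle_0$ cross-terms are rearranged, using invariance of $\langle-,-\rangle_0$, into expressions proportional to $\sum_{m+n}l_n(l_m(X^{\otimes m}),X^{\otimes(n-1)})$, which cancel by the curved $l_\infty$ relations; the remaining $\langle dX,\,-\rangle_0$ pieces are total derivatives and pass, via Stokes, to $\partial\Sigma$. Only after combining those boundary contributions with the single Stokes term $\tfrac12\int_{\partial\Sigma}\langle X,\delta X\rangle_0$ that you isolated in $\delta H$ does one obtain the correct $\tfrac{1}{(n+1)!}$ coefficients of $S^{1d}$. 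As written, your boundary piece paired with $\frac{\delta S^{BV}}{\delta\eta}$ (and using $\eta|_{\partial\Sigma}=0$) gives $\tfrac12\int_{\partial\Sigma}\langle X,\,dX+\sum\frac{1}{n!}l_n\rangle_0$, whose $l_n$ coefficients are $\tfrac{1}{2\cdot n!}$, not $\tfrac{1}{(n+1)!}$; the discrepancy is exactly what the omitted bulk term supplies.
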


\proof Firstly one notes that $H$ is a well-defined classical observable, then the computation is straight-forward.\qed

In the case where the Poisson target is not symplectic, we do not expect the theory to be fully reduced to boundary. In the case of regular Poisson, along the direction where the Poisson bivector degenerates, the theory behaves like a cotangent theory, which has no obvious reason to exhibit holographic properties.

\subsection{The existence of a quantised boundary theory}

We have seen that the Poisson model, in the symplectic case, is equivalent to a boundary theory classically. Now we shall give the proof of Thm.~\ref{bulkbdy}, which provides a homotopy for $I_{\rm eff}$ and $S^{1d}_{\rm eff}$.

\begin{proof}
By definition, the quantum corrected functional $H_{\rm eff}$ is given by
\[\sum_\Gamma \frac1{|{\rm Aut}\,(\Gamma)|} \int_{\overline{{\rm Conf}(\Gamma, \Sigma)}} 
(\prod_{e\in IE(\Gamma)} \hslash \partial^e_{P}) \, (\prod_{v\in V(\Gamma)\backslash\{v_0\}} \hslash^{-1}I_v )\cdot H_{v_0},\]
where the summation is over finite directed graphs with a fixed base point $v_0$.
Next we compute $dH_{\rm eff} + \hslash\Delta H_{\rm eff} +\{I_{\rm eff}, H_{\rm eff}\}$.

The above expression graphically corresponds to contributions labeled by connected admissible graphs. The first term comes from admissible graph with one $H$-vertex and finite $I$-vertices, where the operator $d$ acts on external half-edges. The second (resp. the third) term is the admissible graphs with an inner edge corresponding to $dP$ in the loop (resp. connecting two components, see for example Fig.~\ref{poisson_4_1}). 

\begin{figure}
\begin{center}
\includegraphics[width=3in]{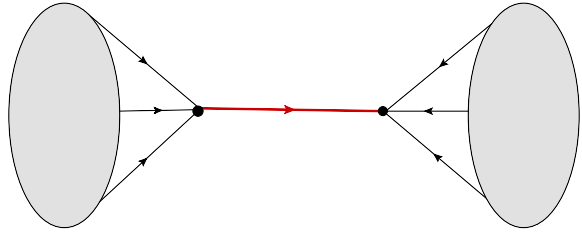}
\end{center}
\caption{\footnotesize An illustration of graphs contributing to $\{I_{\rm eff}, H_{\rm eff}\}$, where the red edge corresponds to $dP$, the shaded bubbles denote the connected subgraph components.}
\label{poisson_4_1}
\end{figure}

The sum corresponds to the integration of exact terms over configuration spaces, and hence can be computed by restricting the graphs to the boundary strata of $\overline{{\rm Conf}(\Gamma, \Sigma)}$.

It is informative to probe into the formations of boundary strata --- they fall into the following four cases.

\begin{itemize}
\item Some bulk vertices coming from $I$ collapse at the boundary. 
\item Some bulk vertices (could be none) coming from $I$ and one $H$-vertex collapse at the boundary, see Fig.~\ref{poisson_4_4}. 
\item Three or more bulk vertices collide into one single bulk vertex. 
\item Two bulk vertices collide into one single bulk vertex. \end{itemize}

The first case does not contribute due to our choice of boundary conditions. Nor does the third case, since this contradicts the vanishing result as shown in proving QME.

In the second case, due to the same reason, the only case that can survive is given by restricting a single $H$-vertex to the boundary. The $H$-vertices at boundary are in one-one correspondence to the vertices in $S^{1d}$, and hence each such graph evaluates the quantum corrected $S^{1d}$.
For example, in Fig.~\ref{poisson_4_4}, restricting the $H$-vertex $\int_{\Sigma}\langle X, \frac1{3!}l_2(X^{\otimes 2})\rangle_0$ to the boundary is equivalent to replacing the $H$-vertex by $\int_{\partial\Sigma}\langle X, \frac1{3!}l_2(X^{\otimes 2})\rangle_0$ in the given graph, which contributes to the quantum correction to the latter.

\begin{figure}
\begin{center}
\includegraphics[width=2.5in]{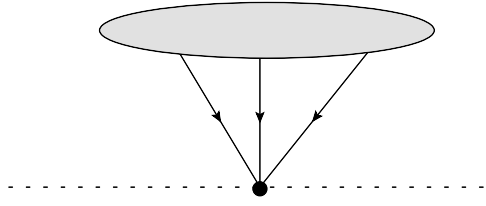}
\end{center}
\caption{\footnotesize An illustration of quantum corrections of $H$, where $H$-vertex is restricted to the boundary, the shaded bubble denotes the rest part of the graph.}
\label{poisson_4_4}
\end{figure} 

The last case concerns the shrinking of a single edge due to dimension reasons. So this can be further classified into several sub-cases.
\begin{enumerate}
\item The edge connects one $l_n$-vertex to another $l_m$-vertex.
\item The shrinking edge connects one $l_n$-vertex to the $\Pi$-vertex.
\item The shrinking edge connects one $l_n$-vertex to an $H$-vertex.
\item The shrinking edge connects one $\Pi$-vertex to an $H$-vertex.
\end{enumerate}
The first two sub-cases vanish due to the same reason as in proving QME. 
The third sub-case vanishes since it produces effectively a vertex proportional to $l_n\circ l_m$, which vanishes upon summation over $m+n$.
In the forth sub-case, suppose the $H$-vertex is given by $-\int_\Sigma \langle X, \frac1{(n+1)!} l_n(X^{\otimes n})\rangle_0$, then the quotient graph upon the collapsing  involves effectively an $l_n$-vertex. So those gives rise to a graph contains an $l_n$-term in $I_{\rm eff}$ up to a sign. The analytic part involves integration of one propagator over the diagonal boundary strata of the two-point configuration space, which gives a unit.

On the other hand, this is nothing but the quantum correction to a boundary observable $\int_{\partial\Sigma}\Pi^{-1}(X, \frac1{4!}l_3(X^{\otimes 3}))$. 
So we have that \[dH_{\rm eff}+\Delta H_{\rm eff}+\{I_{\rm eff}, H_{\rm eff}\} = S^{1d}_{\rm eff} - I_{\rm eff},\]
which completes the proof.
\end{proof}

\section{Outlooks}

In the current paper, we investigated the classical and quantum BV theory originated from Poisson sigma model via formal geometry approach. As a topological theory, the theory was formulated on two-dimensional Riemann surfaces with boundaries, which generalises the flat case as in \cite{Kont97}. 

Since the formulation of quantum BV theories in terms of a quantum derived moduli problem \cite{Co11}, not many theory with boundaries have been studies. On the other hand, the boundary theory we are dealing with in the current paper exhibits a quite universal pattern. Namely, Kontsevich's deformation quantization used the polyvector fields over Poisson manifold as the classical object to be quantised, as opposed to the Poisson manifold itself. The polyvector construction has now been understood as a shifted Poisson variety which retains many properties of the original Poisson structure. The difficulty in formulating a Lagrangian theory for Poisson model in one-dimension is just a specific example. A general construction of ``center" in the derived Poisson world shall enable us to mimic Kontsevich's construction in more cases. 

If one takes the polyvector construction of some perhaps shifted symplectic structure and consider the corresponding BV theory as we did here, then it is immediate that such theory is trivial over a manifold without boundaries classically. So all the potential interesting phenomena happens at the boundary. 

BV quantum theory over a flat disk is expected to gives rise to an observable algebra over the Swiss-Cheese operad. In the general case, we looked into some observables and identified the SC-module structure. However, locality is not guaranteed. Even from the construction of the propagator, the cohomology representatives of the worldsheet need not be local. It is interesting to further investigate the cases where locality is absent in Lagrangian field theories, which might shed some light on possible formulations of non-perturbative theories. 

\section*{Appendix: The invariant pairing for symplectic connection}

It is known that a choice of symplectic connection determines an $l_\infty$ structure \cite{Fedosov, GLL}. In this appendix we shall show that the symplectic pairing on the so-defined $l_\infty$ algebra is invariant.

To show that the symplectic pairing is invariant, we need to look back into the procedure of solving the dual $l_\infty$ structure recursively from the flatness condition of the Abelian connection of the underlying Weyl bundle, as shown in the references above. 

Indeed, the Chevalley-Eilenberg differential in $C^*(\mathfrak h)$ can be decoupled into two parts, \[\partial = \delta + (\sum_{n\geq 1} l_n^\vee(e^i))\partial_{e^i},\quad \delta=l_0^\vee(e^i)\partial_{e^i},\] where $\{e^i\}$ is a chosen $\Omega$-linear basis of $\mathfrak h^\vee[-1] \equiv \Omega\otimes \mathcal T^\vee$. The $l_0$ part itself is a derivation which is a differential, and if the basis of $\Omega\otimes \mathcal T^\vee$ is compatible with the local coordinates, we have that $l_0^\vee(e^i)\partial_{e^i} = dx^i\partial_{e^i} $. Moreover, it has a homotopy contraction \footnote{The contraction $\delta^{-1}$ is a cochain map, which does not preserve the algebra structure.} which we shall denote by $\delta^{*}$ and is locally 
\[\frac1{p+q} y^i\iota_{\frac\partial{\partial x^i}}\quad, {\rm on}\, \Omega^p\otimes {\rm Sym}^q(\mathcal T^\vee).\] It is easy to check that 
\[\delta\circ \delta^{*}+ \delta^{*}\delta = id \,\,{\rm on}\, \oplus_{p+q>0}\Omega^p\otimes {\rm Sym}^q(\mathcal T^\vee).\]

The nilpotence of CE differential results in the relation on $l_n$ by restricting the image component-wise. So from the relation
\[\delta \circ l_n^\vee (e^i)\partial_{e^i}  = \sum_{a+b=n; 0<a,b<n} l_a^\vee(e^i)\partial_{e^i}\circ l_b ^\vee(e_j) \partial_{e^j},\]
one can obtain a unique solution of $ \{l_n^\vee\}_{n\geq2}$ recursively once we determine $l_1$ and impose the gauge fixing condition that $\delta^*\circ l_n^\vee (e^i)\partial_{e^i} = 0$. In fact, due to the previous prescription, we have that 
\[l_n^\vee (e^i)\partial_{e^i}  =  \delta^{*}\circ l_1^\vee(e^i)\partial_{e^i}\circ l_{n-1} ^\vee(e_j) \partial_{e^j}= \{\delta^{*}, l_1^\vee(e^i)\partial_{e^i}\}\circ l_{n-1} ^\vee(e_j) \partial_{e^j},\]
for $n\geq3$ and
\[l_2^\vee (e^i)\partial_{e^i}  =  \delta^{*}\circ l_1^\vee(e^i)\partial_{e^i}\circ l_{1} ^\vee(e_j) \partial_{e^j}.\] 
Assuming skew-selfadjointness, the $l_1$ can be identified with a symplectic connection $\nabla$ on $\Omega\otimes\mathcal T_M$(, and correspondingly on $\Omega\otimes\mathcal T^\vee_M$). Now viewing $\delta^*\circ l_1^\vee(e^i)\partial_{e^i}$ as a linear map on ${\rm Sym}^*\mathfrak h^\vee[-1]\equiv {\rm Sym}^* (\Omega\otimes\mathcal T^\vee_M)$, its action on each basis is defined by 
\begin{eqnarray*}f_{i_1\cdots i_k} e^{i_1}\cdots e^{i_k} &\mapsto& \delta^* (\sum_{j} (l_1^\vee)_{m}^{i_j} f_{i_1\cdots i_j\cdots i_k} e^{i_1}\cdots  e^m \hat{e^{i_j}} \cdots e^{i_k} + df_{i_1\cdots i_k} e^{i_1}\cdots e^{i_k})
 \\&\sim&
 \sum_{j}\iota_{\frac\partial{\partial x^n}} (l_1^\vee)_{m}^{i_j} f_{i_1\cdots i_j\cdots i_k}  e^{i_1}\cdots e^n e^m \hat{e^{i_j}} \cdots e^{i_k} + \partial_n f_{i_1\cdots i_k} e^n e^{i_1}\cdots e^{i_k}
\end{eqnarray*}
if we assume that 
\[\delta^*f_{i_1\cdots i_k} e^{i_1}\cdots e^{i_k}=0. \]
Dually, when $l_1$ is given by a symplectic connection, we have that 
\[
l_{n+1} (e_{i_0}, \cdots, e_{i_n})= \sum_{a=0}^{n} \frac1{n+1} (\nabla_{i_a}l_{n})(e_{i_0}, \cdots, \hat{e_{i_a}}, \cdots e_{i_n})
\]
for $n\geq2$ and
\[ l_{2} (e_m, e_n)^\bullet=\iota_{\frac\partial{\partial x^m}} R ^\bullet_n = [\iota_{\frac\partial{\partial x^m}}, (\nabla^2)^\bullet_n ].\]

\begin{proposition}
The symplectic pairing $\langle-,-\rangle_0$ is symmetric with respect to $\{l_n\}_{n\geq1}$ constructed as above if $l_1$ is given by a symplectic connection $\nabla$. 
\end{proposition}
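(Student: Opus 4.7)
My plan is to establish the invariance by induction on $n$, with the key structural input being $\nabla\omega = 0$ for the symplectic connection and its differential consequences for curvature and higher covariant derivatives.

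For the base case $n=1$, the operator $l_1$ decomposes as $\delta + \nabla$ on the Weyl algebra bundle. The $\delta$-part accounts for the de Rham differential in the base cdga $(\Omega_M, d_M)$, and the $\nabla$-part is (graded) skew-self-adjoint with respect to $\langle-,-\rangle_0$ precisely because the symplectic connection satisfies $\nabla\omega = 0$, which is equivalent to the Leibniz rule $\nabla\langle X, Y\rangle_0 = \langle \nabla X, Y\rangle_0 + \langle X, \nabla Y\rangle_0$ for $\omega$. This yields the chain-level graded skew-symmetry required.

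For $n=2$, I would use the explicit formula $l_2(e_m, e_n)^\bullet = \iota_{\partial/\partial x^m} R_n^\bullet$ given just above the proposition. The invariance condition reduces to the fact that the symplectic curvature $R = \nabla^2$ takes values in $\mathfrak{sp}(T_M)$, i.e.,
\[
\omega\bigl(R(X,Y)Z, W\bigr) + \omega\bigl(Z, R(X,Y)W\bigr) = 0,
\]
which follows by differentiating $\nabla\omega = 0$ once more and extracting the antisymmetric-in-$(X,Y)$ part.

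For the inductive step from $l_n$ to $l_{n+1}$ with $n\geq 2$, I would apply the recursive formula
\[
l_{n+1}(e_{i_0},\ldots,e_{i_n}) = \frac{1}{n+1}\sum_{a=0}^{n} (\nabla_{e_{i_a}} l_n)(e_{i_0},\ldots,\widehat{e_{i_a}},\ldots,e_{i_n}),
\]
and use $\nabla\omega = 0$ to move the covariant derivative outside the pairing,
\[
\langle v, (\nabla_{u_a} l_n)(\ldots)\rangle_0 = \nabla_{u_a}\langle v, l_n(\ldots)\rangle_0 - \langle \nabla_{u_a} v, l_n(\ldots)\rangle_0.
\]
The first piece collects into a base-cdga exact term, which is permitted by the ``chain complex'' clause in the definition of invariant pairing. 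The inductive hypothesis applied to the second piece, together with the symmetrization over the distinguished position $a$, assembles into the full graded skew-symmetry of $\langle v, l_{n+1}(u_0,\ldots,u_n)\rangle_0$ in all $n+2$ slots.

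The main obstacle I anticipate is the bookkeeping of Koszul signs in the symmetrization step, and verifying that the derivative pieces coming from all $n+1$ choices of $a$ combine cleanly with the base-cdga differential $d_M$ rather than leaving a genuine obstruction; this should work because $\nabla$ restricted to $\Omega^0_M$ factors of the Weyl bundle coincides with $d_M$. A secondary check is that the gauge-fixing condition $\delta^{*}\circ l_n^\vee(e^i)\partial_{e^i} = 0$ pinning down the recursion does not spoil invariance, which should follow since $\delta^{*}$ is constructed from interior contraction by coordinate vector fields and hence is compatible with the symplectic pairing once one tracks how the homotopy contracts across symmetric tensor powers.
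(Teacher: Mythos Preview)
Your base cases $n=1,2$ match the paper's, but your inductive step takes a different route from the paper and contains a gap. The paper does \emph{not} run an induction via the Leibniz rule for $\nabla\omega=0$; instead it argues in two steps. First it shows that each $l_n$, realized as $\nabla^{n-2}R$, is fully symmetric in its $n$ inputs, using the torsion-free property of the symplectic connection to commute the first two covariant derivatives in $\nabla_{v_1}\nabla_{v_2}l_{n-2}(\ldots)$. Second, it observes that since $R$ is $\mathfrak{sp}$-valued, in the expression $\langle v_0,\nabla_{v_1}\cdots\nabla_{v_{n-2}}R(v_{n-1},v_n)\rangle_0$ the outermost entry $v_0$ can be swapped with $v_n$. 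Combining symmetry in $(v_1,\ldots,v_n)$ with the $(v_0\leftrightarrow v_n)$ symmetry yields full symmetry of $\langle v_0,l_n(v_1,\ldots,v_n)\rangle_0$ on the nose, with no appeal to coboundary terms.

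The gap in your argument is the claim that $\sum_a \nabla_{u_a}\langle v, l_n(\ldots)\rangle_0$ is a $d_M$-exact term in $\Omega_M$. A directional derivative $\nabla_{u_a}f=\iota_{u_a}d_M f$ is not $d_M$-exact in general, and summing over $a$ does not make it so. Without that, your induction does not close: after moving $\nabla_{u_a}$ outside the pairing you are left with terms of the form $\langle \nabla_{u_a}v,l_n(\ldots)\rangle_0$, and swapping $v$ with some $u_b$ produces incompatible expressions (e.g.\ $\nabla_v u_0$ versus $\nabla_{u_0}v$) that you cannot reconcile using the inductive hypothesis alone. The missing structural input is precisely what the paper uses: torsion-freeness to control the symmetry of $l_n$ in its inputs, and the $\mathfrak{sp}$-property to swap the paired entry into the bracket. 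You never invoke torsion-freeness, which is essential here.
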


\begin{proof}
Recall that $\langle v_0, l_1(v_1)\rangle_0 = \langle v_1, l_1(v_0)\rangle_0$ by the skew-selfadjointness of $l_1$ and skew symmetry of the symplectic pairing.
The 2-operation $l_2 = R^\nabla$ viewed as an element in $\Omega^1({\rm Hom}(T\otimes T, T))$. One needs to do the consistency check that $l_2$ is further an element in $\Omega^1({\rm Hom}({\rm Sym}^2 T, T))$, but this is guaranteed by the fact that the symplectic connection is torsion free. Recall that $R^\nabla$ is the curvature 2-form valued in the symplectic Lie algebra $\mathfrak{sp}$, so $\langle l_2(-,u), v\rangle$ is symmetric with respect to $u$ and $v$. This, combined with the symmetry in $R^\nabla$, states that $\langle l_2(u,v),w\rangle$ is totally symmetric. 

Similar, for general operation $l_n=  \nabla ^{n-2} R$, we will show that $l_n\in \Omega^1({\rm Hom}({\rm Sym}^n T, T))$ via induction. In fact, suppose 
\[l_{n-1}\in \Omega^1({\rm Hom}({\rm Sym}^{n-1} T, T)),\,\forall v_1, \cdots, v_n\in \mathfrak g,\] 
then
\begin{eqnarray*}l_n(v_1,\cdots, v_n) &=& \nabla_{v_1} l_{n-1}(v_2, \cdots, v_n)=\nabla_{v_1}\nabla_{v_2} l_{n-2}(v_3, \cdots, v_n) \\
&=& \nabla_{v_2}\nabla_{v_1} l_{n-2}(v_3, \cdots, v_n)=l_n(v_2,v_1,\cdots, v_n).\end{eqnarray*}
The last step is again due to the torsion-free property of the symplectic connection. By induction hypothesis, $v_2,\cdots, v_n$ are symmetric, now with $v_1, v_2$ symmetric, this shows that $l_n\in \Omega^1({\rm Hom}({\rm Sym}^n T, T))$.

In the expression, 
\[\langle v_0, l_{n}(v_1,\cdots, v_n)\rangle \equiv \langle v_0, \nabla_{v_1}\cdots \nabla_{v_{n-2}} R(v_{n-1}, v_n)\rangle,\]
$v_0$ and $v_n$ are symmetric, which comes from the algebraic property of the symplectic Lie algebra $\mathfrak{sp}$. Combining the symmetry property of $l_n$, this completes the proof.
\end{proof}

\bibliographystyle{plain}
\newpage

\end{document}